\pgfplotsset{compat=newest} 
 \newtheorem{theorem}{Theorem}[section]
 \newtheorem{lemma}[theorem]{Lemma}
 \newtheorem{proposition}[theorem]{Proposition}
 \theoremstyle{definition}
\providecommand{\keywords}[1]{\textbf{Keywords: } #1}
\DeclareMathOperator*{\argmin}{argmin}
\newcommand{\init}{\mbox{``{\sf init}''}}
\newcommand{\pre}{\mbox{``{\sf pre-com}''}}
\newcommand{\com}{\mbox{``{\sf com}''}}
\newcommand{\SKIP}{\mbox{{\sf SKIP}}}
\newcommand{\lockvalue}{\mbox{{\sf lockvalue}}_q}
\newcommand{\lockround}{\mbox{{\sf lockround}}_q}
\newcommand{\clock}{\mbox{{\sf clock}}_q}
\newcommand{\status}{\mbox{{\sf status}}}
\newcommand{\Scrs}{S_\text{CRS}}
\newcommand{\Snode}{S_\text{node}}
\newcommand{\tmax}{t_\text{max}}
\title{\textbf{\textit{DEXON}}: \\A Highly Scalable, Decentralized DAG-Based Consensus Algorithm}
\author{Tai-Yuan Chen}
\author{Wei-Ning Huang}
\author{Po-Chun Kuo}
\author{Hao Chung}
\author{Tzu-Wei Chao}
\affil{DEXON Foundation, \\Taiwan}
\affil{\textit {\{popo,w,pk,haochung,n\}@dexon.org}}
\date{\today, v2.0}
\begin{document}
\maketitle

\begin{abstract}

A blockchain system is a replicated state machine that must be fault tolerant. When designing a blockchain system, there is usually a trade-off between decentralization, scalability and security. In this paper, we propose a novel blockchain system, DEXON, which achieves high scalability while remaining decentralized and robust in the real-world environment.

We have two main contributions. First, we present a highly scalable sharding framework for blockchain. This framework takes an arbitrary number of single chains and transforms them into the \textit{blocklattice} data structure, enabling \textit{high scalability} and \textit{low transaction confirmation latency} with asymptotically optimal communication overhead. Second, we propose a
single-chain protocol based on our novel verifiable random function and a new Byzantine agreement that achieves high decentralization and low latency.

\end{abstract}

\keywords{Blockchain, Blocklattice, Consensus, Byzantine Agreement, Byzantine Fault Tolerance, Replicated State Machine, Total Ordering}
\newpage
\tableofcontents
\newpage

\section{Introduction}

Blockchain systems are being challenged to demonstrate rigorous robustness and high performance in real-world situations. Many applications demand low transaction confirmation latency and high transaction throughput. However, most blockchain systems do not satisfy these criteria. For example, the confirmation latency of Ethereum is about 5 to 10 minutes and the throughput is limited to about 30 transactions per second. By contrast, some blockchain systems achieve high performance but sacrifice the robustness of the systems. For example, EOS is operated with only 21 supernodes, and is vulnerable to DDoS attacks.

We propose a novel blockchain framework, \emph{DEXON}, which achieves high performance and remains robust on the real-world Internet. To this end, we design a data structure, called \emph{blocklattice}, which allows numerous single chains to grow concurrently. Then, we propose a novel method to integrate these single chains into a globally-ordered chain without additional communication.

The blocklattice structure is a directed acyclic graph (DAG) that consists of many single chains. The blocks in the different chains acknowledge (ack) each other and collectively form this DAG structure. We use the total ordering algorithm to achieve consensus on the blocklattice so that all users are guaranteed to have the same view of the ordering of all the blocks. Because the single chains can grow concurrently, the throughput of DEXON can be easily scaled up.

We also propose our single-chain protocol, which is based on Algorand \cite{algorand} but with some minor improvements. The Algorand consensus protocol is a breakthrough that allows millions of nodes to join the protocol; each node has fair opportunity to propose and validate blocks. For this protocol, we present a new verifiable random function and Byzantine agreement achieving low confirmation time even for a large population of nodes. We emphasize that our blocklattice structure with its total ordering algorithm is a generic framework that applies to any kind of single-chain protocol. Thus, if the throughput of the underlying single-chain protocol increases by a factor of $2$, the total throughput of the whole system also increases by a factor of $2$.

\subsection{Main Features of DEXON}
DEXON has the following advantages:
\begin{description}
  \item[High Scalability] \hfill \\ Most blockchain solutions are not able to scale their throughput even with increased resources. Our blockchain system can adjust the number of chains dynamically while preserving the same latency. The throughput of our system is only bound by the available network bandwidth and computational power. In addition, the scaling methodology of the DEXON consensus is generic. That is, the blocklattice structure with its total ordering algorithm can be applied to any kind of single-chain consensus protocol.
  
  \item[Low Latency] \hfill \\ Latency is the amount of time from block proposal to confirmation. Therefore, latency is one of the paramount properties for any blockchain. We propose a fast Byzantine agreement that is expected to terminate in $6\lambda$ time, where $\lambda$ is the upper bound of the network’s gossip period. The DEXON consensus algorithm achieves second-level latency instead of traditional minute-level latency as exhibited by blockchains such as Ethereum or Bitcoin. A second-level latency blockchain opens a new era that provides numerous variations in service that cannot be delivered by traditional blockchains. 
  
  \item[High Decentralization] \hfill \\ The DEXON consensus is based on ``\emph{proof-of-participation (PoP)}''; that is, every node has equal chance to propose a block. We adopt a \emph{verifiable random function (VRF)} to decide who can issue a block; this serves to minimize the communication cost so that a large population can join the protocol.
  
  \item[Transaction Ordering Fairness] \hfill \\ In traditional blockchain systems, a single proposer can determine the transaction ordering; this makes traditional blockchain systems vulnerable to front-run attacks. By contrast, in the DEXON consensus algorithm, no single block proposer can determine the consensus timestamp of a proposed block.
  
  \item[Unpredictable Randomness] \hfill \\ Randomness is often a desired functionality in various smart contracts such as decentralized applications (DApps), particularly gaming DApps. Normally, a blockchain system cannot generate unpredictable randomness on-chain, so DApp developers must rely on some trusted third party for random input, such as the service provided by Oracalize. However, the DEXON consensus generates on-chain unpredictable randomness on the fly as it achieves consensus. Once a block has been confirmed by the DEXON Byzantine agreement, a committee of nodes generates a threshold signature, which is an unpredictable value. Thus, the hash of the signature serves as the unique, unpredictable, and unbiased randomness of the block.

  \item[Explicit Finality] \hfill \\ In blockchain systems based on proof-of-work, such as Bitcoin or Ethereum, transaction confirmation is probabilistic. Only after users have waited for a long sequence of block confirmations can a transaction be considered as \textit{probabilistically finalized}; the system is thus vulnerable to double spending attacks. For use cases such as payment networks, an explicit finality with probability 1 is necessary. In the DEXON consensus, every transaction is confirmed to be finalized with probability 1 and is secured by DEXON's Byzantine agreement algorithm, which has been proven correct with rigorous mathematical logic.
    
  \item[Low Communication Overhead] \hfill \\ A two-phase-commit consensus algorithm has an $\mathcal{O}(n^{2})$ communication complexity in any $n$-node setting and is thus highly costly to scale up. 
  The DEXON consensus adopts VRF to reduce the amount of nodes joining the protocol from $n$ parties to $\mathcal{O}(\log n)$. Thus, the communication cost is reduced from $\mathcal{O}(n^2)$ to $\mathcal{O}(n \log n)$. In this situation, the number of nodes can be scaled to millions while maintaining only hundreds of nodes that must communicate with each other.
  
  \item[Energy Efficiency] \hfill \\ DEXON Consensus has asymptotically optimal computation overhead, making it highly energy efficient.
    
  \item[Low Transaction Fees] \hfill \\ The transaction fees of a typical blockchain increase when the blockchain network is congested. The DEXON consensus is highly scalable and has low communication overhead, which enables it to maintain the lowest possible transaction fees in large-scale deployments.
\end{description}

When designing a blockchain system, a trade-off usually exists between decentralization, performance, and safety, which we call the \emph{trilemma} problem in blockchains. For example, EOS and Hashgraph achieve high performance, but they are operated by few supernodes and are vulnerable to DDoS attack. By contrast, Algorand is decentralized and has robust safety, but its throughput is limited.

In DEXON, we balance the requirements of the trilemma. DEXON has scalable transaction throughput and low confirmation latency. However, DEXON remains highly decentralized and robust in practical deployment environments.

\subsection{Related Work}
\textit{Proof-of-Work}
Bitcoin \cite{bitcoin} is the first blockchain protocol whose consensus delivers Nakamoto consensus: This means that the Bitcoin system solves a mathematical puzzle as a proof to generate next block, and once a block has been followed by six continuous blocks, the block is confirmed. This mechanism causes Bitcoin to have a latency of approximately one hour. Even more problematically, proof-of-work-based consensus consumes exorbitant quantities of energy.\vspace{1ex} \\
\textit{Proof-of-Stake}
Numerous proof-of-stake consensus systems \cite{algorand,DFINITY,HASHGRAPH,DBLP:conf/eurocrypt/DavidGKR18} have been proposed in recent years. In these schemes, the nodes with adequate stakes have the right to propose their blocks. The probability that a node can propose a block is proportional to the stakes the node owns.
\vspace{1ex} \\
\textit{DAG-based consensus}
Phantom, SPECTRE, IOTA, Conflux, and Mechcash are DAG-based consensus systems, all of which are, at their core, variants of the Nakamoto consensus. This leads two disadvantages: first, these systems demonstrate low performance (throughput is low and latency is long); second, the finality is probabilistic, allowing some attacks (such as selfish-mining) to exist. To conclude, constructing a DAG-based consensus with the Nakamoto consensus limits performance and safety. 
\vspace{1ex}\\
Algorand is a breakthrough proposed by Gilad et al. \cite{algorand}, that reduces the communication complexity from $\mathcal{O}(n^2)$ to $\mathcal{O}(n \ln n)$ and thus supports large population of nodes (e.g. 500K nodes). They use a verifiable random function (VRF) to protect nodes from DDoS attack, and the VRF is also a lottery that decides which node has the right to propose a block or to vote for each round of their Byzantine agreement protocol.
The consensus of Algorand is based on Byzantine agreement among samples from the whole set of nodes. 
Thus, the probability of the correctness of whole system is based on hypergeometric distribution.
This is the reason why Algorand can only tolerate less than one third of total number of nodes to be malicious while Algorand achieves high decentralized.
\vspace{1ex}\\
Dfinity \cite{DFINITY} is a permissioned blockchain and is designed for large population of nodes (around 10K nodes). Dfinity contains
a randomness beacon which generates new randomness by a VRF with information from new confirmed block. They use the output of a VRF to select a leader and electors for a round. 
By hypergeometric distribution, Dfinity only samples hundreds of nodes to notarize a block instead of using all nodes, and the correctness holds with high probability.
\vspace{1ex}\\
The consensus of Hashgraph \cite{HASHGRAPH} adopts Byzantine agreement on a graph and their round-based structure costs a latency of $\mathcal{O}(\ln n)$ for each round of Byzantine Agreement, which means its confirmation time increases with the number of nodes. This limits the decentralized level of Hashgraph.

\paragraph{Roadmap} In Section \ref{section:preliminary}, we introduce the cryptographic primitives and the system model used in this paper. In Section \ref{section:singleChain}, we formally introduce our Byzantine agreement protocol and describe how to build up a single chain by our Byzantine agreement protocol. In Section \ref{section:blocklattice}, we introduce how DEXON reaches a consensus on a blocklattice by using its total ordering algorithm. Finally, we discuss the sharding scheme and the method to adjust system parameters in Section \ref{section:extension}.

\section{Preliminaries and Model}
\label{section:preliminary}

\subsection{Cryptographic Primitive}
In this section, we introduce the cryptographic primitives used in this paper, including our hash function, digital signature, threshold signature, and verifiable random function.

\paragraph{Hash Function and Digital Signature} In this paper, we model the hash function $Hash$ as a random oracle. That is, the hash function acts as a truly random function which can only be accessed by ``querying'' an oracle. We also assume that an unforgeable digital signature $Sig$ is available. We define $Sig_{sk}(m)$ to be the signature of the message $m$ signed by the secret key $sk$.

\paragraph{Threshold Signature} 
Let $n$ be the number of parties. 
A \emph{$(n, t)$-threshold signature} scheme allows arbitrary $t$ parties to sign a message and any party with the public key can verify the threshold signature. 
A threshold signature scheme consists of five probabilistic polynomial-time algorithms:
\begin{itemize}
    \item KeyGen($1^\Lambda$): a distributed \emph{key generation} algorithm takes as input a security parameter $\Lambda$, and outputs a public key $PK$, a set of verification keys $\{VK_i\}_{i=1}^{n}$, and the secret key $SK_i$ for each party $i$.
    \item ShareSign($SK_i,m$): a \emph{share signing} algorithm takes as input a message $m$ and a secret key $SK_i$. It outputs a signature share $SSign_i(m)$.
    \item Verify-ShareSign($m, VK_i, \sigma_i$): a \emph{share verification} algorithm takes as input a message $m$, the verification key $VK_i$ and a signature share $\sigma_i$ on $m$ from the party $i$. It outputs $1$ if $\sigma_i$ is valid and outputs $0$ otherwise.
    \item Combine($m,PK,\{VK_i\}_{i\in T},\{\sigma_i\}_{i\in T}$): a \emph{share combining} algorithm takes as input a message $m$, the public key $PK$, the set of verification key $\{VK_i\}$ and $t$ verified signature shares $\{\sigma_i\}_{i\in T}$ on the message $m$ where $T$ is a subset of $\{1,2,\cdots,n\}$ such that $|T| = t$. It outputs a threshold signature $TSign(m)$.
    \item Verify-TSign($m,PK, \Sigma$): a \emph{signature verification} algorithm takes as input a message $m$, the public key $PK$, and a threshold signature $\Sigma$. It outputs $1$ if $\Sigma$ is valid and outputs $0$ otherwise.
\end{itemize}
It is required that except with negligible probability, Verify-ShareSign$\left(m, VK_i,\mbox{ShareSign($SK_i,m$)}\right) = 1$ holds for any party $i$. 
It is also required that except with negligible probability, Verify-TSign$\left(m,PK, \Sigma\right) = 1$ holds, where $\Sigma$ is the threshold signature of $t$ valid signature shares $\{\sigma_i\}_{i\in T}$ for any subset $T \subset \{1,2,\cdots,n\}$.

For the security definition, one can refer to \cite{Libert2014}.

\paragraph{Verifiable Random Function}
Let $(PK,SK)$ be a pair of a public key and a private key. \emph{Verifiable random function (VRF)} is a kind of pseudo-random function such that only a user that has $SK$ can compute the function whereas anyone can verify the validity of the function evaluation by $PK$ and public information. Typically, a VRF consists of the following algorithms:
\begin{itemize}
    \item KeyGen($1^\Lambda$): a probabilistic polynomial time algorithm takes as input a security parameter $\Lambda$, and generates a public key and private key pair $(PK, SK)$.
    \item VRF($SK, x$):  a deterministic algorithm takes as input the private key and an initial value $x$, and outputs verifiable random value $y$. 
    \item Prove($SK, x$): a deterministic algorithm takes as input the private key and an initial value $x$, and output the proof of correctness $w$.
    \item Verify($x, y, w$): a deterministic algorithm verifies the validity of the VRF using the proof $w$.
\end{itemize}

For the correctness and security of VRF, we refer to \cite{vrf,DBLP:conf/pkc/DodisY05}.

\subsection{Terminology and System Model}
\label{sec.model}
\paragraph{Terminology}
In DEXON, an \emph{user} is uniquely identified by its public key and secret key pair. All users can transfer and receive the stakes from other users. They can also verify the correctness of the blockchain. 
Our blockchain is maintained by a special set of users $\Snode$ where the members of $\Snode$ are called \emph{nodes}. We define two special sets of nodes: the \emph{CRS set} $\Scrs$ and the \emph{notary set} $S_\text{notary}$ (a node can be in two sets at the same time). The nodes in $\Scrs$ and $S_\text{notary}$ are crucial for our single chain algorithm, which will be explained in Section \ref{section:singleChain}.

A node can pack a batch of transactions into a \emph{block}. In our single chain algorithm, we say a node \emph{proposes} a block if the block is a candidate that can be selected by our Byzantine agreement protocol. We say a node \emph{issues} a block if the block is selected and becomes a block in the single chain.

\paragraph{System Model}
We assume the adversary can corrupt the nodes \emph{adaptively}. That is, the adversary can choose which nodes are corrupted during the protocol. The corrupted nodes are called \emph{Byzantine} and the nodes that are not corrupted are called \emph{correct}. A Byzantine node can deviate from the protocol arbitrarily; it can engage in problematic malfunctions such as sending conflicting messages, violating algorithm criteria, delaying the messages between other nodes, and so on. We also assume the adversary has full control of the network. The adversary can learn all the messages delivered on the network and determine the delay and the order of the delivered messages.

If the adversary does not delay any messages between the correct nodes, we assume the network is \emph{weakly-synchronous}. That is, there exists a known \emph{time bound} $\lambda$ for the messages between any two correct nodes. We say the network is \emph{partitioned} if the messages between the correct nodes are delayed such that the delivering time exceeds $\lambda$.

\section{Single Chain}
\label{section:singleChain}
In this section, we introduce how a set of nodes build an agreed single chain. In short, the members in $\Scrs$ generate a public randomness. Then, for each height of the single chain, the members in $S_\text{notary}$ propose their blocks. Then, they try to reach an agreement regarding who is the leader for that height and all nodes adopt the leader's proposal for the next block. As long as all the correct nodes can agree on who is the leader for each height, they can build up an agreed blockchain.
 
In Section \ref{subsection:VRF}, we introduce two useful primitives for Byzantine agreement protocol: \emph{common reference string} and \emph{verifiable random function}. In Section \ref{subsection:ba}, we introduce our Byzantine agreement protocol based on leader election. Finally, we formally describe how to build a single chain in Section \ref{subsection:merge}.

\subsection{Common Reference String and Verifiable Random Function}
\label{subsection:VRF}
In this section, we introduce two useful primitives for building a single chain: common reference string (CRS) and verifiable random function (VRF).

\paragraph{Common Reference String}
An \emph{epoch} consists of a specific number of blocks. 
The CRS in our setting is actually a public randomness generated by a deterministic algorithm for each epoch; no user in the system can predict the CRS of any future epoch.
In our setting, the CRS of epoch $i$ is updated by $R_i = Hash(TSig(R_{i-1}))$, where $TSig(\cdot)$ is a threshold signature function whose input is some set of share-signatures produced by the nodes in $\Scrs$ of epoch $i-1$.

This method is similar to DFinity \cite{DFINITY}, but their system categorizes the group when users join the system. We emphasize that the users in their system have a high incentive to be malicious because each group has non-negligible probability to have the right to propose a block and compute the randomness.
However, our system selects CRS nodes to compute the randomness according to results from a previous epoch.

\paragraph{Verifiable Random Function}
The verifiable random function introduced by Micali, Rabin and Vadhan \cite{vrf} is a type of pseudorandom function by which anyone can verify the validity of the function evaluation from public information.
Several practical VRFs have been proposed \cite{DBLP:conf/pkc/DodisY05,DBLP:conf/tcc/GoyalHKW17} and most of them are based on bilinear functions.
Algorand \cite{algorand} and Ouroboros Praos \cite{DBLP:conf/eurocrypt/DavidGKR18} have demonstrated that VRF is a powerful primitive for achieving cryptographic sortition in blockchains.
%
Our system works with PKI; each user can access all other users' public keys (preregistered in previous blocks) and user $j$ computes the verifiable random function as:
\begin{equation}\label{eq:vrf}
\left|R_i - Hash\left(Sig_{sk_j}(x)\right)\right|,
\end{equation}
where $R_i$ is CRS of epoch $i$ and $x$ is public information. 
Thus, $Hash(Sig_{sk_j}(x))$ is verifiable with $user_j$'s public key and unpredictable.  
Our VRF has three benefits compared to the VRF in Algorand.
First, our design is fairer.
The verifiable random function used in Algorand is $Hash(Sig_{sk_j}(Q_{i-1}, x))$, where $Q_{i-1}$ is the randomness from the previous block.
Whether an adversary can choose to propose a block depends on the randomness of that adversary’s block.
If the randomness is beneficial for Byzantine nodes (e.g. higher probability of proposing the next block), then the adversary proposes the block.
Thus, the overall advantage of the adversary increases up to $(1/3)/(1-1/3) = 1/2$.
The main problem is that the proposer decides the block and the randomness at the same time.
Therefore, we separate the permission of proposing a block and generating the randomness in order to avoid such bias attacks.
Second, our design is more flexible to compute because each user can compute part of the VRF for any status at any time, say, $Hash\left(Sig_{sk_j}(x)\right)$.
At the beginning of each epoch, any user can get the CRS of the epoch and compute the probability of proposing a block in the epoch.
However, Algorand's VRF requires $Q_{i-1}$ to compute $Q_i$.
Third, our design has better space consumption. In Algorand, each block must store the randomness for VRF, but our design uses the same randomness for many blocks in one epoch.
Thus, the space complexity is reduced by a constant.

A further optimization is only to update CRS when new users join. Then, the space complexity is reduced to a constant value that is independent of the number of blocks.

\subsection{Byzantine Agreement}
\label{subsection:ba}
The \emph{Byzantine general problem} was introduced by Lamport, Shostak, and Pease \cite{Lamport1982}; it allows a set of nodes to agree on a single bit $b \in \{0,1\}$ where some of the nodes may be malicious. Recently, Chen et al. \cite{Chen2018ALGORANDAS} proposed a Byzantine agreement protocol based on leader election, which achieves fast agreement in a synchronous network, upholds safety in an asynchronous network, and recovers from any partition rapidly.

Now, we introduce our Byzantine agreement protocol, which is based on Algorand's protocol but includes some minor improvements. In DEXON, each single chain is maintained by a notary set $S_\text{notary}$. Thus, only the members in $S_\text{notary}$ are involved in the Byzantine agreement protocol. Let $n$ be the number of nodes in $S_\text{notary}$ and $t$ be the number of Byzantine nodes in $S_\text{notary}$.

Comparing to Algorand, our leader election procedure is independent of the round index, so the nodes are not required to propose their values at each round. Consequently, except for the first round, the running time of each round reduces to $2\lambda$. Also, if no partition is present, our protocol terminates in $t$ rounds in the worst case and can be expected to terminate in $\frac{7}{4}$ rounds.


Let $\tmax = \lfloor (n-1)/3 \rfloor$, $R_i$ denote the CRS at epoch $i$ and $V$ denote the set of values that can be decided. We also define two special values $\bot$ and $\SKIP$ that are not in $V$. For each node $q \in S_\text{notary}$, $q$ has four internal variables: $r_q$ records the index of the round at which $q$ is working, $\lockvalue$ records the candidate value that $q$ supports, $\lockround$ records the index of the round from which $\lockvalue$ comes and $\clock$ is $q$'s local clock. Let $sk_q$ and $pk_q$ denote the secret key and public key of $q$, respectively. We define $\status$ to be the public predictable information of the block (e.g. shard ID, chain ID, block height). For the leader election, each node $q \in S_\text{notary}$ computes the signature $\sigma_q = Sig_{sk_q}(\status)$ with its secret key $sk_q$. We define three kinds of messages:
\begin{enumerate}
    \item the \emph{initial message} of the node $q$: $\left(\init,v_q,q,\sigma_q \right)$
    \item the \emph{pre-commit message} of the value $v$ from the node $q$ at the round $r$: $\left(\pre,v,q,r\right)$
    \item the \emph{commit message} of the value $v$ from the node $q$ at the round $r$: $\left(\com,v,q,r\right)$
\end{enumerate}

With these notations, we introduce the leader election algorithm which will be a subroutine of our Byzantine agreement protocol. Let $M_q$ denote the set of initial messages that the node $q$ receives from other nodes. The node $q$ verifies the signatures in $M_q$ and sets $U_q$ to be the set of nodes whose signatures are valid. Then, $q$ computes \[
\ell_q = \argmin_{j\in U_q} \left|R_i - Hash\left(\sigma_j\right)\right|.
\]
We say $\ell_q$ is the \emph{leader} of $q$.

Our Byzantine agreement protocol (Algorithm \ref{algorithm:ba}) is a round-based protocol. Initially, for all correct nodes $q \in S_\text{notary}$, $q$ initializes its internal variables by $r_q = 1$, $\lockvalue = \bot$, $\lockround = 0$ and $\clock = 0$ and also chooses its initial value $v_q \in V$. 

Our protocol has four steps in each round. At Step 1, all the nodes gossip their own initial value $v_q$ in the format $(\init,v_q,q,\sigma_q)$. 

When $\clock = 2\lambda$, $q$ enters Step 2. If $\lockvalue = \bot$, $q$ verifies the initial messages it receives and computes the set $U_q$ of nodes whose signatures are valid. If $U_q \neq \emptyset$, $q$ identifies its leader $\ell_q$ and pre-commits $\ell_q$'s value; otherwise, $q$ pre-commits $\bot$. If $\lockvalue \neq \bot$, node $q$ pre-commits $\lockvalue$. We say node $q$ pre-commits on a value $v$ if node $q$ gossips the message $(\pre,v,q,r)$ where $r$ is the round number that node $q$ is working at.

When $\clock = 4\lambda$, node $q$ enters Step 3. If node $q$ has seen $2\tmax+1$ pre-commit messages of the same value $v\in V\cup \{\bot\}$ at round $r_q$, node $q$ updates $\lockvalue = v$ and $\lockround = r_q$ and commits $v$. Otherwise, node $q$ commits $\SKIP$. Note that node $q$ must commit some value at Step 3. We say node $q$ commits on a value $v$ if node $q$ gossips the message $(\com,v,q,r)$ where $r$ is the round number that $q$ is working at. After node $q$ gossips the commit message, $q$ enters Step 4, at which $q$ waits for the forward conditions.

Suppose a node $q$ is working at round $r_q$. The node $q$ updates its internal variables as soon as one of the following conditions holds:
\begin{enumerate}
    \item If node $q$ has seen $2\tmax +1$ pre-commit messages of the same value $v \in V \cup \{\bot\}$ at the same round $r$ such that $r_q \geq r > \lockround$, $q$ sets $\lockvalue = v$ and $\lockround = r$.
    \item (forward condition) If node $q$ has seen $2\tmax +1$ pre-commit messages of the same value $v \in V \cup \{\bot\}$ at the same round $r$ such that $r > r_q$, $q$ sets $\clock = 2\lambda$, $\lockvalue = v$, $\lockround = r$ and starts the round $r$ from Step 2.
    \item (forward condition) If the node $q$ has seen $2\tmax +1$ commit messages of any value at the same round $r$ such that $r \geq r_q$, $q$ sets $\clock = 2\lambda$ and starts the round $r+1$ from Step 2. 
\end{enumerate}
We say that node $q$ achieves the \emph{forward condition}, if the condition 2 or the condition 3 holds. Node $q$ goes into the next round immediately if it achieves the forward condition even if it does not achieve the forward condition at Step 4.

Node $q$ \emph{decides} on a value $v$ as soon as node $q$ has seen $2\tmax +1$ commit messages of the same value $v \in V \cup \{\bot\}$ at the same round $r$.

The protocol for a node $q$ is summarized as Algorithm \ref{algorithm:ba}.
\begin{algorithm}[!htbp]
\caption{Synchronized Byzantine Agreement Algorithm}
\label{algorithm:ba}
\Fn{DEXON\_Byzantine\_Agreement \text{for node} q}{
    \Input{an initial value $v_q \in V$ from node $q$ and the public key $\{pk_q\}$ from all nodes}
    \Output{an agreed value $v_{fin}\in V \cup \{\bot\}$ from some node}
  Initialize $r_q = 1$, $\lockvalue = \bot$, $\lockround = 0$ and $\clock = 0$
  
  \Step { 1: when $\clock = 0$,}{
    gossip$(\init,v_q,q,\sigma_q)$
  }
  \Step { 2: when $\clock = 2\lambda$, }{
    \If{$\lockvalue = \bot$ and $U_q \neq \emptyset$}{
      node $q$ identifies its leader $\ell_q$ at $q$'s current view\\
      gossip$(\pre,v_{\ell_q},q,r_q)$
    }
    \ElseIf{$\lockvalue = \bot$ and $U_q = \emptyset$}{
      gossip$(\pre,\bot,q,r_q)$
    }
    \Else{gossip$(\pre,\lockvalue,q,r_q)$}
  }
  \Step { 3: when $\clock = 4\lambda$,}{
    \If{node $q$ has seen $2\tmax +1$ pre-commit messages of the same value $v\in V\cup \{\bot\}$ at round $r_q$}{
         $\lockvalue = v$\\
         $\lockround = r_q$\\
        gossip$(\com,v,q,r_q)$
        }
    \Else{gossip$(\com,\SKIP,q,r_q)$}
  }
  \Step { 4: when $\clock  \in (4\lambda,\infty)$}{
    wait until the \emph{forward condition} is achieved
  }
}
\end{algorithm}

\subsubsection{Agreement}
We first show that our protocol will reach agreement; that is, all the correct nodes will decide on the same value.

\begin{lemma}\label{lemma:agreement1}
Assume $t \leq \tmax$. Suppose a node $p$ receives $2\tmax +1$ commit messages of $v_p$ and another node $q$ receives $2\tmax +1$ commit messages of $v_q$. If both these $2\tmax +1$ commit messages all come from the round $r$, then $v_p = v_q$.
\end{lemma}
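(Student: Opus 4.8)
The plan is to run a standard \emph{quorum-intersection} argument, exploiting two facts: that the two commit-quorums of size $2\tmax+1$ are too large to be disjoint on correct nodes, and that a correct node emits exactly one commit message per round. The second fact I would read straight off the protocol: a correct node executes Step 3 exactly once while working at a given round $r$, where it gossips a single message $(\com,\cdot,q,r)$, and since the forward conditions only ever advance $r_q$, no correct node re-enters a round it has left. Hence for each round $r$ and each correct node $q$ there is at most one value $v$ for which a message $(\com,v,q,r)$ legitimately exists.

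First I would extract the correct senders. Among the $2\tmax+1$ commit messages of $v_p$ at round $r$ that $p$ receives, the senders are distinct nodes and at most $t$ of them are Byzantine, so the set $C_p$ of \emph{correct} nodes that gossiped $(\com,v_p,\cdot,r)$ satisfies $|C_p| \geq 2\tmax+1-t$. Symmetrically, the set $C_q$ of correct nodes that gossiped $(\com,v_q,\cdot,r)$ satisfies $|C_q| \geq 2\tmax+1-t$. Both $C_p$ and $C_q$ are subsets of the set of correct nodes, whose cardinality is $n-t$.

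Next I would intersect the two quorums. By inclusion-exclusion,
\[
|C_p \cap C_q| \;\geq\; |C_p| + |C_q| - (n-t) \;\geq\; 2(2\tmax+1-t) - (n-t) \;=\; 4\tmax + 2 - t - n .
\]
Using $t \leq \tmax$ together with the sizing relation $n = 3\tmax+1$ coming from $\tmax = \lfloor (n-1)/3\rfloor$, this lower bound is at least $\tmax + 1 - t \geq 1$, so $C_p \cap C_q$ contains some correct node $w$. By construction $w$ gossiped both $(\com,v_p,w,r)$ and $(\com,v_q,w,r)$; by the single-commit-per-round fact above this forces $v_p = v_q$, which is the claim.

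The main obstacle is the counting step: the whole argument collapses unless $2\tmax+1$ is large enough to guarantee that two such quorums overlap in a \emph{correct} node, which is exactly the inequality $4\tmax+2-t-n \geq 1$. I would therefore state the relationship between $n$ and $\tmax$ explicitly and track the Byzantine senders separately, since a Byzantine node may equivocate by signing commit messages for two distinct values in the same round and so appear in both quorums; the bound must still yield a correct node after those (at most $t$) equivocating senders are discarded.
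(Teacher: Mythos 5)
Your proof is correct and takes essentially the same route as the paper, which likewise argues that two quorums of $2\tmax+1$ commit senders must intersect in at least one correct node (pigeonhole) and that a correct node commits only one value per round. Your version merely makes the arithmetic explicit; the one caveat is that $\tmax=\lfloor(n-1)/3\rfloor$ gives $n\geq 3\tmax+1$ rather than equality, so the overlap bound you need is really $n\leq 4\tmax+1-t$ --- an implicit sizing assumption that the paper's one-line pigeonhole step relies on just as much as your calculation does.
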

\begin{proof}
We prove this lemma by contradiction. Suppose $v_p \neq v_q$. Because as many as $t$ Byzantine nodes exist, there exists at least one correct node that both commits on $v_p$ and $v_q$ by the pigeonhole principle. However, correct nodes can only commit on one value at one round, which leads to a contradiction.
\end{proof}

\begin{theorem}[Agreement]
Assume $t \leq \tmax$. Regardless of partition, if a correct node $p$ decides on some value $v_p$ and a correct node $q$ decides on some value $v_q$, then $v_p = v_q$. That is, the correct nodes will never decide on different values.
\end{theorem}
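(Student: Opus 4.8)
The plan is to reduce the general statement to Lemma~\ref{lemma:agreement1} together with a cross-round \emph{locking} invariant. Write the two decisions as follows: $p$ decides $v_p$ after seeing $2\tmax+1$ commit messages of $v_p$ all from a single round $r_p$, and $q$ decides $v_q$ after seeing $2\tmax+1$ commit messages of $v_q$ all from a single round $r_q$; without loss of generality $r_p \le r_q$. If $r_p = r_q$, the conclusion $v_p = v_q$ is exactly Lemma~\ref{lemma:agreement1}, so the entire difficulty lies in the case $r_p < r_q$, where the two commit quorums live in different rounds and the pigeonhole argument of the Lemma no longer applies directly.

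First I would record the two consequences of a decision. Since at most $t \le \tmax$ nodes are Byzantine, each quorum of $2\tmax+1$ like-valued messages contains at least $2\tmax+1-t$ messages from correct nodes. Hence deciding $v_p$ at round $r_p$ implies that at least $2\tmax+1-t$ correct nodes executed Step~3 of round $r_p$, each having seen $2\tmax+1$ pre-commit messages of $v_p$ from round $r_p$ and therefore setting their locked value to $v_p$ with lock round $r_p$. Likewise, $q$'s decision forces the existence of a \emph{pre-commit quorum} for $v_q$ at round $r_q$ (that is, $2\tmax+1$ pre-commit messages of $v_q$ all stamped with round $r_q$), because every correct committer of $v_q$ must have seen one.

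The core of the argument is the invariant, proved by induction on $\rho$: \emph{for every round $\rho \ge r_p$, if a pre-commit quorum for some value $v$ exists at round $\rho$, then $v = v_p$.} The base case $\rho = r_p$ is again a same-round quorum intersection: two pre-commit quorums at round $r_p$ share a correct node, which pre-commits only one value per round, and since $v_p$ already has such a quorum we get $v = v_p$. For the inductive step I would intersect the at least $2\tmax+1-t$ correct nodes that locked $v_p$ at round $r_p$ with the at least $2\tmax+1-t$ correct nodes that pre-committed $v$ at round $\rho$; because $2(2\tmax+1-t) > n-t$ whenever $t \le \tmax$ (the same super-majority intersection underlying Lemma~\ref{lemma:agreement1}), these two correct-node sets share a node $x$. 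I then trace $x$'s locked value from round $r_p$ up to the instant it pre-commits at round $\rho$: its lock round only increases and stays $\ge r_p$, and every lock update it performs before that instant is triggered by a pre-commit quorum at a round in $[r_p,\rho-1]$, which by the induction hypothesis carries the value $v_p$. Consequently $x$ still holds locked value $v_p$ at Step~2 of round $\rho$, so it pre-commits $v_p$, forcing $v = v_p$. Applying the invariant at $\rho = r_q$ to the pre-commit quorum for $v_q$ then yields $v_q = v_p$, which is the theorem.

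The main obstacle is making the lock-tracing in the inductive step airtight against the two forward conditions and against the special value $\bot$. A correct node may jump into round $\rho$ via forward condition~2 and reset its locked value from the very round-$\rho$ quorum I am trying to analyze, which threatens circularity; I would break this by choosing $x$ to be, among the relevant correct nodes, the \emph{earliest in time} to pre-commit $v$ at round $\rho$, so that when $x$ acts fewer than $2\tmax+1$ pre-commits of $v$ from round $\rho$ yet exist and $x$ therefore cannot have entered round $\rho$ through a round-$\rho$ quorum, pushing its last lock update back into $[r_p,\rho-1]$ where the hypothesis bites. The second delicate point is that Step~2 treats a node whose locked value equals $\bot$ exactly like an unlocked node, so I must handle $v_p = \bot$ separately: for $v_p \ne \bot$ the induction hypothesis rules out any $\bot$-valued pre-commit quorum (equivalently, any reset of a locked value to $\bot$) at the intermediate rounds, keeping $x$ genuinely locked, whereas the case $v_p = \bot$ needs its own bookkeeping. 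I expect this $\bot$ accounting, rather than the quorum counting, to be where the proof requires the most care.
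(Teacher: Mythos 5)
Your proposal follows essentially the same route as the paper: the same-round case is dispatched by Lemma~\ref{lemma:agreement1}, and the cross-round case rests on the observation that the at least $\tmax+1$ correct nodes that locked $v_p$ at round $r_p$ make a pre-commit quorum for any value other than $v_p$ impossible at every later round, so the quorum justifying $q$'s decision at $r_q$ must carry $v_p$. Your explicit induction on the round index, together with the ``earliest pre-committer'' choice to break the apparent circularity of a node re-locking from the very quorum under analysis, is a rigorous rendering of the paper's one-line claim that these nodes ``will always pre-commit on $v_p$ until they see $2\tmax+1$ pre-commit messages of $v'\neq v_p$,'' for which ``only $2\tmax$ nodes remain.'' The one substantive point you raise that the paper does not address is the $v_p=\bot$ case: Step~2 of Algorithm~\ref{algorithm:ba} sends a node whose locked value equals $\bot$ back to its leader's value rather than re-pre-committing $\bot$, so the locked super-majority does not block a conflicting quorum when the decided value is $\bot$. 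That is a gap in the paper's own argument rather than a defect you introduce; closing it requires either amending Step~2 to treat a lock on $\bot$ as a genuine lock, or a separate argument that a decision on $\bot$ cannot be followed by a pre-commit quorum on some $v\neq\bot$.
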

\begin{proof}
Because $p$ decides on $v_p$ and $q$ decides on $v_q$, $p$ and $q$ must see $2\tmax +1$ commit messages of $v_p$ and $2\tmax +1$ commit messages of $v_q$, respectively. Suppose both these $2\tmax +1$ commit messages come from the same round $r$. By Lemma \ref{lemma:agreement1}, we have $v_p = v_q$.

Suppose the $2\tmax +1$ commit messages that $p$ receives come from the round $r_p$ and the $2\tmax +1$ commit messages that $q$ receives come from the round $r_q$. Without loss of generality, we assume $r_p < r_q$. Because there are up to $\tmax$ Byzantine nodes, there must be at least $\tmax+1$ correct nodes commit on $v_p$ so that $p$ can receive $2\tmax +1$ commit messages of $v_p$. For all rounds $r > r_p$, these $\tmax+1$ correct nodes will always pre-commit on $v_p$ until they see $2\tmax +1$ pre-commit messages of $v' \neq v_p$ at Step 3. However, only $2\tmax$ nodes remain, so these $\tmax+1$ correct nodes will never pre-commit any $v' \neq v_p$ for all $r > r_p$. Thus, for all $r > r_p$, if some value $v$ has $2\tmax +1$ pre-commit messages, then $v = v_p$. 

Because $q$ receives $2\tmax +1$ commit messages of $v_q$, there must exist at least $\tmax+1$ correct nodes that commit on $v_q$ at the round $r_q$. These $\tmax+1$ correct nodes commit on $v_q$ only if they have seen $2\tmax +1$ pre-commit messages of $v_q$ at round $r_q$. Therefore, $v_q = v_p$.
\end{proof}

\subsubsection{Termination}\label{subsubsection:baTermination}
We now analyze when the algorithm terminates if no partition exists or if the system recovers from a previous partition.

\begin{proposition}[Termination without partition]\label{proposition:terminationWOpartition}
Assume $t \leq \tmax$. If all the correct nodes start at the $r$-th round within time $\lambda$ and no partition exists, all the correct nodes will decide on some values in $t+1$ rounds. 
\end{proposition}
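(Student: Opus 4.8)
The plan is to reduce termination to exhibiting a single \emph{successful} round --- one in which every correct node pre-commits a common value $v^\star$ --- and then to bound the number of rounds the adversary can sabotage by the number of Byzantine nodes. First I would record two structural facts. \textbf{Counting:} since $n \ge 3\tmax+1$ and $t \le \tmax$, there are at least $2\tmax+1$ correct nodes, so any value pre-committed (resp.\ committed) by all correct nodes is seen by all of them to cross the $2\tmax+1$ threshold. \textbf{Synchronization:} from the hypothesis that correct nodes enter round $r$ within $\lambda$ and no partition occurs, I would show by induction that they stay $\lambda$-synchronized at every subsequent round --- all correct pre-commits and commits are delivered within $\lambda$, the step clocks are spaced by $2\lambda$, and the forward conditions therefore fire for all correct nodes within $\lambda$ of one another, preserving the ``enter round $r'$ within $\lambda$'' hypothesis. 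Granting these, the core claim is immediate: if round $r'$ is successful then all correct nodes decide $v^\star$ in round $r'$, since at Step~3 each sees the $\ge 2\tmax+1$ correct pre-commits of $v^\star$, commits $v^\star$, and within a further $\lambda$ sees $\ge 2\tmax+1$ commits of $v^\star$ and decides.

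It remains to produce a successful round within $t+1$ rounds. Here I would exploit that the leader ordering $\left|R_i - Hash(\sigma_j)\right|$ is \emph{fixed across rounds}, together with the gossip-relay property: initial messages are gossiped at protocol start and relayed, so that once any correct node holds node $j$'s valid initial message, all correct nodes hold it from the next round on. Let $a_c$ be the correct node ranked first (smallest $\left|R_i - Hash(\sigma_j)\right|$). Then $a_c \in U_q$ for every correct $q$, so every correct node with $\lockvalue = \bot$ elects a leader that is $a_c$ or a Byzantine node ranked ahead of $a_c$. I would argue that each non-successful round can be charged to a \emph{distinct} such Byzantine node: a round fails only through a split created by a Byzantine top candidate that either revealed its initial message late or equivocated on its value $v_j$, and by gossip relay that candidate's message (or its equivocation) becomes common to all correct nodes from the next round, so it cannot induce a fresh split again. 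Since at most $t$ Byzantine nodes lie ahead of $a_c$, after at most $t$ rounds the first commonly-known candidate is correct, every unlocked correct node elects it, and the round succeeds; hence a successful round occurs by round $r+t$.

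The main obstacle is the interaction of leader election with \emph{locked} states, because even without partition a Byzantine node may pre-commit selectively in timing, so entering round $r$ the correct nodes may be mixed: some locked to a value $v^\star$, others with $\lockvalue = \bot$. To push the charging argument through I must establish (i) that locked values among correct nodes are \emph{consistent} --- this is exactly the mechanism already used for \Cref{lemma:agreement1} and the Agreement theorem, namely that once $v^\star$ gathers $2\tmax+1$ pre-commits at some round the $\ge \tmax+1$ correct nodes that commit it keep pre-committing it and block any other value from ever reaching $2\tmax+1$ --- and (ii) that at the first aligned round the value the unlocked nodes pre-commit (their common leader's value) coincides with any $v^\star$ forced by the locked nodes, so the round is genuinely successful rather than split between $v^\star$ and the leader's value.

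Reconciling (ii) --- equivalently, ruling out a persistent split in which a locked minority on $v^\star$ prevents the leader's value from reaching the threshold while itself being too small to decide --- is the delicate step. I expect to resolve it by invoking (i): once any locking has occurred the decidable value is unique, so no competing leader value can survive to the target round, and I would argue that the unlocked correct nodes are driven to pre-commit exactly $v^\star$ (through the forward condition that installs $v^\star$ as their $\lockvalue$ upon seeing $2\tmax+1$ pre-commits of it) before round $r+t$, after which the leader's value is irrelevant. A secondary point to pin down is the precise handling of an equivocating candidate --- whether its value is treated as undefined or resolved identically by all correct nodes via relay --- since in either case the candidate must be prevented from charging two separate failed rounds; making this explicit is what closes the $\le t$ bound on failed rounds.
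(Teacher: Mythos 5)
Your plan follows essentially the same route as the paper's proof: the paper's Cases 1 and 2 correspond to your locked-state analysis (a value that crosses the $2\tmax+1$ pre-commit threshold is gossiped and forces all correct nodes to converge on it within one round), and its Case 3 is exactly your charging argument (a round can only fail because a Byzantine node withheld or equivocated on its initial message, which relay renders common to all correct nodes from the next round on, so each Byzantine node spoils at most one round and at most $t$ rounds fail). The delicate point (ii) you flag is resolved in the paper just as you anticipate --- propagation of the $2\tmax+1$ pre-commit messages plus the lock-update rule aligns all correct nodes on $v^\star$ within $\lambda$ --- so your proposal is correct and, if anything, more explicit than the paper about the fixed leader ranking and the round-by-round synchronization invariant.
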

\begin{proof}
\hfill \break
\noindent\textit{Case 1: Some correct node has decided.}
If a correct node $p$ has decided on value $v_p$, $p$ must have seen $2\tmax +1$ commit messages of $v_p$. Because $p$ propagates these $2\tmax +1$ commit messages, all the correct nodes will hold this information after time $\lambda$ and decide on $v_p$ in one round.

\noindent\textit{Case 2: Some correct node has seen $2\tmax +1$ pre-commit messages.}
Suppose no node has decided but there exists a correct node $p$ that has seen $2\tmax +1$ pre-commit messages of a value $v_p$. Because $p$ propagates these $2\tmax +1$ pre-commit messages, all the correct nodes will hold this information after time $\lambda$. Thus, all the correct nodes will commit $v_p$ so that they can reach an agreement on $v_p$.

\noindent\textit{Case 3: No correct node has seen $2\tmax +1$ pre-commit messages.}
Because no correct node has ever seen $2\tmax +1$ pre-commit messages, $\lockvalue = \bot$ for all correct node $q$. Thus, they will identify their own leader by their local view and by the leader's pre-commit value. Because all correct nodes start at the $r$-th round within time $\lambda$, they can receive all the initial values from other correct nodes before identifying the leaders. Thus, there exists some correct nodes that pre-commit different values relative to each other only if a Byzantine node proposes different initial values to different nodes\footnote{Note that not proposing any initial value is considered to be equivalent to proposing $\bot$.}. However, the correct nodes will propagate the initial value so all correct nodes will have the same set of initial values after time $\lambda$. Thus, to prevent the correct nodes from agreeing on the same leader, Byzantine nodes must propose different initial values to different nodes at every round. However, a node can only propose an initial value once or it will be caught. Thus, the best strategy of Byzantine nodes is that different Byzantine nodes propose their initial values at different rounds so $t$ Byzantine nodes can only interfere during $t$ rounds. Thus, all the correct nodes will decide on some values in $t+1$ rounds with certainty.
\end{proof}

\begin{proposition}
Assume $t \leq \tmax$. Suppose all correct nodes start at $r$-th round within time $\lambda$ and no partition exists. Then, it is expected that all correct nodes will decide on some values in $\frac{7}{4}$ rounds.
\end{proposition}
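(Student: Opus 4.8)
The plan is to build directly on the case analysis in the proof of Proposition~\ref{proposition:terminationWOpartition}. In Cases 1 and 2 every correct node decides within one round, so the only source of extra rounds is Case 3, in which no correct node has yet seen $2\tmax+1$ pre-commit messages and hence every correct node has $\lockvalue=\bot$ and must fall back on leader election. I would therefore define a round to be \emph{successful} if all correct nodes elect the same correct leader, show that a successful round decides, and then bound the expected number of rounds until the first success.

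The deterministic half is a short lemma: if every correct node has $\lockvalue=\bot$ and they all elect the same correct leader $\ell$, then they all decide in that round. Since $\ell$ is correct it gossips a single value $v_\ell$, so every correct node pre-commits $v_\ell$ at Step~2. Because $n\ge 3\tmax+1$, there are $n-t\ge 2\tmax+1$ correct nodes, so at Step~3 every correct node sees $2\tmax+1$ pre-commits of $v_\ell$, commits $v_\ell$, and then sees $2\tmax+1$ commits of $v_\ell$ and decides. The timing observation from Case 3 (all correct nodes start within $\lambda$ and correct messages arrive within $\lambda$) guarantees that every correct initial message lies in $U_q$ by the time $q$ runs leader election at $\clock=2\lambda$, so a correct global minimizer is indeed elected unanimously.

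For the probabilistic half I would model $Hash$ as a random oracle and use that the CRS $R_i$ is unpredictable, so the distances $\lvert R_i-Hash(\sigma_j)\rvert$ behave as i.i.d.\ uniform values whose ordering is independent of which nodes are Byzantine; in particular the adversary cannot grind signatures to move a Byzantine node to the front of this ordering. Whenever the globally smallest distance belongs to a correct node, that node lies in every $U_q$ and is unanimously elected, so the round succeeds; this occurs with probability $(n-t)/n\ge 2/3$. The remaining point is that leader election is round-independent — the same $\sigma_j$ are reused each round — so progress hinges on the mechanism already used in Proposition~\ref{proposition:terminationWOpartition}: a failing round is caused by a Byzantine leader that proposes conflicting initial values, and such a node is ``caught'' once correct nodes propagate the conflict and is excluded from $U_q$ thereafter. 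Granting this, each failed round permanently removes one Byzantine candidate that was ranked ahead of the first correct node, so the number of rounds to decision is at most the rank of the first correct node in the distance ordering. Taking expectation over the uniform ordering gives $\dfrac{n+1}{n-t+1}$, which for $t\le\tmax$ is bounded by $\tfrac{7}{4}$ (indeed by $\tfrac{3}{2}$).

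The main obstacle is making the ``caught and excluded'' step airtight. I must check the timing carefully: when a Byzantine leader sends different initial values to different correct nodes, I need every correct node to have received both conflicting messages — and therefore to discard that leader — before the leader election of the next round, so that no Byzantine node can spoil two rounds and the count is exactly the number of lower-ranked Byzantine nodes. A secondary delicate point is justifying, against an adaptively corrupting adversary, that the hash ordering is independent of the set of Byzantine nodes; this rests entirely on the unpredictability of $R_i$, which prevents the adversary from choosing corruptions after seeing the ordering.
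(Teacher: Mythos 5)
Your proposal is correct and follows essentially the same route as the paper: only rounds led by a Byzantine node can fail to decide, each Byzantine node can spoil at most one round before being caught, and the expected number of Byzantine nodes ranked ahead of the first correct node in the VRF distance ordering is then bounded using $t/n \leq 1/3$. The only difference is bookkeeping --- you compute the expected rank of the first correct node exactly as $\frac{n+1}{n-t+1} \leq \frac{3}{2}$, while the paper upper-bounds the expected number of interfered rounds by the series $\sum_{i=1}^{t} i\left(\frac{t}{n}\right)^{i} \leq \frac{3}{4}$ and adds one deciding round --- and the two obstacles you flag (the timing of the ``caught and excluded'' step and the independence of the hash ordering from adaptive corruptions) are likewise left implicit in the paper's own argument.
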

\begin{proof}
From the proof of Proposition \ref{proposition:terminationWOpartition}, we know that if some correct node has decided on a value $v$ or has seen $2\tmax +1$ pre-commit messages of a value $v$, then all the correct nodes will decide on $v$ in one round.

In a network without partition, the best strategy for the Byzantine nodes has been described in Case 3 in the proof of Proposition \ref{proposition:terminationWOpartition}. However, to interfere with $k$ rounds successfully, the Byzantine nodes must win the leadership \footnote{Notice that the leader is the node $j$ with the lowest value $\left|R_i - Hash\left(\sigma_j\right)\right|$.} in the following $k$ rounds. The probability of such an event is \[
\prod_{i=0}^{k-1} \left(\frac{t-i}{n-i}\right) \leq \left(\frac{t}{n}\right)^k.
\]
Thus, in expectation, the number of interfered rounds can be upper-bound by \[
\sum_{i=1}^t \left(\frac{t}{n}\right)^i \cdot i \leq \frac{\frac{t}{n}}{(1-\frac{t}{n})^2}.
\]
Because $n \geq 3\tmax+1 \geq 3t+1$, the expected number of interfered rounds is less than $\frac{3}{4}$. After an interfered round, all the correct nodes will decide in the next round, so it is to be expected that they will all terminate in $\frac{7}{4}$ rounds.
\end{proof}

\begin{proposition}[Termination when the partition is resolved]
Assume $t \leq \tmax$. If the partition is resolved, all the correct nodes will decide on some values in $t+2$ round. It is to be expected that all correct nodes will decide on some values in $\frac{11}{4}$ rounds.
\end{proposition}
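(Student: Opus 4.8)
The plan is to reduce this statement to the no-partition analysis of Proposition~\ref{proposition:terminationWOpartition} by showing that, once the partition is resolved, a single extra ``synchronization'' round suffices to bring every correct node to a common round within time $\lambda$. From that synchronized configuration the earlier bounds ($t+1$ rounds in the worst case, and $\frac{7}{4}$ rounds in expectation) apply verbatim, and we simply charge one additional synchronization round to obtain $t+2$ and $1+\frac{7}{4}=\frac{11}{4}$, respectively.

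First I would let $r^\star$ be the largest round index held by any correct node at the instant the partition is resolved (the case $r^\star = 1$ is immediate, since then every correct node already sits at round $1$ and only the shared initial messages need to propagate, so assume $r^\star \ge 2$). The central observation is that a correct node can only have left round $r^\star-1$ through a forward condition, so whoever sits at round $r^\star$ must hold a propagatable \emph{certificate}: either $2\tmax+1$ pre-commit messages of a common value at round $r^\star$ (if it advanced by condition~2), or $2\tmax+1$ commit messages at round $r^\star-1$ (if it advanced by condition~3). Because the network is weakly-synchronous again, this certificate reaches every correct node within time $\lambda$.

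Next I would perform the case analysis on that certificate. If it consists of $2\tmax+1$ commit messages of the \emph{same} value $v$ (in particular, if some correct node had already decided before resolution), then every correct node sees them within $\lambda$ and decides on $v$, finishing in at most one round. Otherwise the certificate triggers forward condition~2 or~3 at every lagging correct node: a node at round $r' < r^\star$ either jumps to $r^\star$ upon seeing the $2\tmax+1$ pre-commits at $r^\star$, or jumps to $(r^\star-1)+1 = r^\star$ upon seeing the $2\tmax+1$ commits at round $r^\star-1$. Since all such certificates propagate within $\lambda$, every correct node arrives at round $r^\star$ within a single $\lambda$-window. I would also record that the initial messages used for leader election were gossiped earlier and likewise propagate within $\lambda$, and that leader election is independent of the round index; hence Case~3 of Proposition~\ref{proposition:terminationWOpartition} still applies to nodes that entered $r^\star$ via a forward condition (and therefore skipped Step~1), because they nonetheless share the full set of correct initial values.

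Having established that all correct nodes are present at round $r^\star$ within time $\lambda$ with no ongoing partition, I would invoke Proposition~\ref{proposition:terminationWOpartition} to conclude decision within $t+1$ further rounds, and the preceding $\frac{7}{4}$-round bound for the expectation; adding the one synchronization round gives $t+2$ and $\frac{11}{4}$. The step I expect to be the main obstacle is making this synchronization airtight: precisely arguing that the maximal-round node always carries a certificate that forces exactly round $r^\star$ (never higher), that both forward conditions land every lagging node at $r^\star$ inside a single $\lambda$-window rather than at staggered rounds, and that nodes arriving by a forward condition still possess the initial-value set on which the leader-election argument of the earlier proposition depends.
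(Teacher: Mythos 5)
Your overall strategy is the one the paper uses: handle the already-decided case by propagating the $2\tmax+1$ commit certificate, and otherwise use the forward conditions to pull every correct node up to the front-runner's round and then invoke Proposition~\ref{proposition:terminationWOpartition}. However, there is a genuine gap exactly where you anticipated one: the claim that every correct node arrives at round $r^\star$ within a single $\lambda$-window is false for the front-runner itself. A node already sitting at $r^\star$ entered that round at some point \emph{during} the partition, possibly far more than $\lambda$ before resolution, and by the time the laggards catch up it may already have executed Steps 2 and 3 of round $r^\star$ on the basis of its partitioned view (a smaller $U_q$, hence possibly a different leader, or a lock acquired from stale information). So the hypothesis of Proposition~\ref{proposition:terminationWOpartition} --- that all correct nodes start round $r$ within time $\lambda$ of one another --- is not established at round $r^\star$, and round $r^\star$ cannot be counted as the first round of the no-partition analysis.

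The paper closes this by pushing the synchronization one round further. After resolution every correct node reaches round $r^\star$ within time $\lambda$; each of them eventually commits something (possibly $\SKIP$) at Step 3 of that round; and the moment the first node achieves a forward condition out of $r^\star$, its certificate propagates within $\lambda$, so all correct nodes enter round $r^\star+1$ within a $\lambda$-window of \emph{each other}. Proposition~\ref{proposition:terminationWOpartition} is then applied starting from round $r^\star+1$, and the possibly wasted round $r^\star$ is the ``$+1$'' that turns $t+1$ into $t+2$ and $\frac{7}{4}$ into $\frac{11}{4}$. Your accounting already charges one extra round, so the repair is only to relocate the clean synchronized start to $r^\star+1$ rather than $r^\star$; with that change your argument coincides with the paper's.
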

\begin{proof}
If there exists a node $p$ that has decided on a value $v_p$, $p$ must have seen $2\tmax +1$ commit messages of $v_p$. All the correct nodes will receive these $2\tmax +1$ commit messages of $v_p$ within time $\lambda$ after the partition is resolved and decide on $v_p$.

Suppose no node has decided and $p$ is the node working on the latest round $r_p$. To enter the round $r_p$, $p$ must achieve the forward condition at round $r_p - 1$. Because the partition is resolved, all correct nodes will also achieve the forward condition within time $\lambda$ after the partition is resolved and also enter the round $r_p$. Later on, if some node $q$ achieves the forward condition and enters the round $r_p + 1$, other correct nodes will also achieve the forward condition within time $\lambda$. Thus, all correct nodes start at the round $r_p + 1$ with time difference $< \lambda$ and Proposition \ref{proposition:terminationWOpartition} guarantees that they will decide on some values within the following $t+1$ rounds. Similarly, it is to be expected that all correct nodes will decide on correct values in $\frac{11}{4}$ rounds.
\end{proof}

\subsubsection{Communication Complexity}
We now analyze the communication complexity of a single node for a single round. Because a correct node will help to propagate the messages, all correct nodes will gossip $\mathcal{O}(n)$ messages in a single round. Thus, the communication complexity for all nodes is $\mathcal{O}(n^2)$ in a single round.

As discussed in Section \ref{subsubsection:baTermination}, if no partition exists or the system recovers from a partition, our Byzantine agreement protocol terminates in $t+1$ rounds in the worst case and is expected to terminate in $\frac{11}{4}$ rounds. We assume $n \geq 3t+1$, so the protocol terminates in $\mathcal{O}(n)$ rounds in the worst case and is expected to terminate in $\mathcal{O}(1)$ rounds. Therefore, the total communication complexity of the protocol is $\mathcal{O}(n^3)$ in the worst case and $\mathcal{O}(n^2)$ in the expected case.

\subsection{Consensus on a Single Chain}
\label{subsection:merge}
In this section, we describe how to build up a single chain by Byzantine agreement protocol. Our single chain protocol has four parts: system setup, registering public key, proposing block, notarizing blocks and updating epoch.

\paragraph{Overview}
The members in $\Scrs$ are responsible for updating the CRS. The members in $S_\text{notary}$ have right to propose a block and are responsible for running the Byzantine agreement protocol to decide whose block can be chosen for the single chain.

When the members in $S_\text{notary}$ reach an agreement of a block, they sign their respective signature shares of the agreement. The block proposer of the next block has to verify these signature shares and combine them into a threshold signature $\Sigma_h$. The threshold signature $\Sigma_h$ should be stored in the next block. If someone tampers the content of a block, the hash of the block changes and become inconsistent with the threshold signature in the next block. Because it is infeasible to forge the threshold signature, the integrity of the blocks is guaranteed.

The members in $\Scrs$ and $S_\text{notary}$ are re-elected for each epoch. Each epoch $i$ corresponds to a randomness $R_i$ of CRS. When $\Scrs$ is re-elected, the members in $\Scrs$ will update the CRS for the next epoch. 

\paragraph{Registering Public Key}
Any user with enough deposit can register its own public key in the chain. The users who want to enter or leave $\Snode$ can announce the request. The request will be recorded in the blocks. The node is called \emph{valid} if it owns enough deposit\footnote{To avoid nothing-at-stake problem, the node who wants to register its public key needs to lock its deposit. In practice, this can be done by smart contracts.} and has registered a valid public key. Let $view^i$ denote the snapshot of the whole system at the first block of the epoch $i$. Let $\Snode^i$ denote the set of all valid nodes in $view^i$. The set $\Snode^{i+1}$ can be updated by $\Snode^i$ and the blocks in the epoch $i$.

\paragraph{Proposing Blocks}
Only the nodes in $S_\text{notary}$ can propose new blocks. To propose a block, a node $q$ computes the hash $v_q$ of the block and takes $v_q$ as the initial value for the Byzantine agreement protocol. Note that node $q$ can send the hash of the block to join the BA scheme as early as possible, but has to propose the complete block immediately so that other nodes can determine the block and commit it.

A node is allowed not to propose any block. Because the leader election is determined by the public predictable information $\status, R_i$, and the secret keys, the node $q$ can compute the value $\left|R_i - Hash\left(Sig_{sk_q}(\status)\right)\right|$ locally. If the value is too large, $q$ is unlikely to be chosen as the leader so it may give up the chance to propose the block.

Practically, we can set a threshold $\delta$ of proposing a block. That is, the node $q$ will propose a block if and only if $\left|R_i - Hash\left(Sig_{sk_q}(\status)\right)\right| \leq \delta$. Suppose $n$ is the number of nodes in $S_\text{notary}$. Then, the probability that at least one node proposes a block is $1-(1-\delta)^n$. Choosing $\delta = 20/n$, the probability is larger than $1-10^{-8}$. If no one proposes a block, all the correct nodes will pre-commit $\bot$ and they will soon decide on $\bot$. Thus, the protocol will not halt. All the correct nodes can agree on an empty block and start the protocol for the next block soon.

\paragraph{Notarizing Blocks}
The nodes in $S_\text{notary}$ are responsible for running the Byzantine agreement protocol (Algorithm \ref{algorithm:ba}) to decide the next block of the chain. As shown in Section \ref{subsection:ba}, as long as there is no partition, the members in $S_\text{notary}$ can reach an agreement even if they do not start within time $\lambda$.

To notarize a block, we choose a $(n,\tmax+1)$-threshold signature scheme. When a node $q \in S_\text{notary}$ decides on some value $v$ for the block with height $h$, node $q$ signs a signature share $\sigma_q$ by its secret key $SK_q$\footnote{Note that the key pair $(PK_q,SK_q)$ for the threshold signature is different from the key pair $(pk_q,sk_q)$ that the node $q$ registers on the block.} and announces $\sigma_q$. The block proposers of the next block (of height $h+1$) have to verify at least $\tmax+1$ signature shares for the block with height $h$ and combine the signature shares into a threshold signature $\Sigma_h$. The threshold signature $\Sigma_h$ should be packed into the next block. We say a block with height $h$ is \emph{notarized} if its threshold signature $\Sigma_h$ is packed in the next block and the next block is issued.

Assume the number of Byzantine nodes is up to $\tmax$. Then, anyone can make sure the block with the hash $v$ is notarized by $S_\text{notary}$ if more than $\tmax+1$ nodes in $S_\text{notary}$ have signed on $v$. To verify the fact, we only need to check whether the block hash $v$ matches the threshold signature $\Sigma_h$ stored in the next block and verify whether $\Sigma_h$ is valid or not.

\paragraph{Updating Epoch}
The members in $\Scrs$ and $S_\text{notary}$ are re-elected for each epoch. Let $\Scrs ^i$ and $S_\text{notary}^i$ denote the CRS set and the notary set of the epoch $i$.

The set $\Scrs ^i$, the set $S_\text{notary}^i$, and the randomness $R_i$ are decided before the epoch $i$ starts. When the epoch $i$ starts, the members in $\Scrs ^i$ generate the randomness $R_{i+1}$ by \[
R_{i+1} = Hash\left(TSig(R_i)\right),
\]
where $TSig(R_i)$ is the threshold signature signed by the nodes in $\Scrs ^i$. After $R_{i+1}$ is decided, the set $\Scrs ^{i+1}$ and the set $S_\text{notary}^{i+1}$ can be elected by Fisher-Yate shuffle with the randomness $R_{i+1}$ from the set $\Snode^i$. Finally, the members in $\Scrs ^{i+1}$ and $S_\text{notary}^{i+1}$ run the key generation algorithm of the threshold signature scheme, respectively. 

The timeline of building a single chain is summarized in the Figure \ref{figure:singlechain}.
\begin{figure}[htpb]
    \centering    
    \includegraphics[width=16cm]{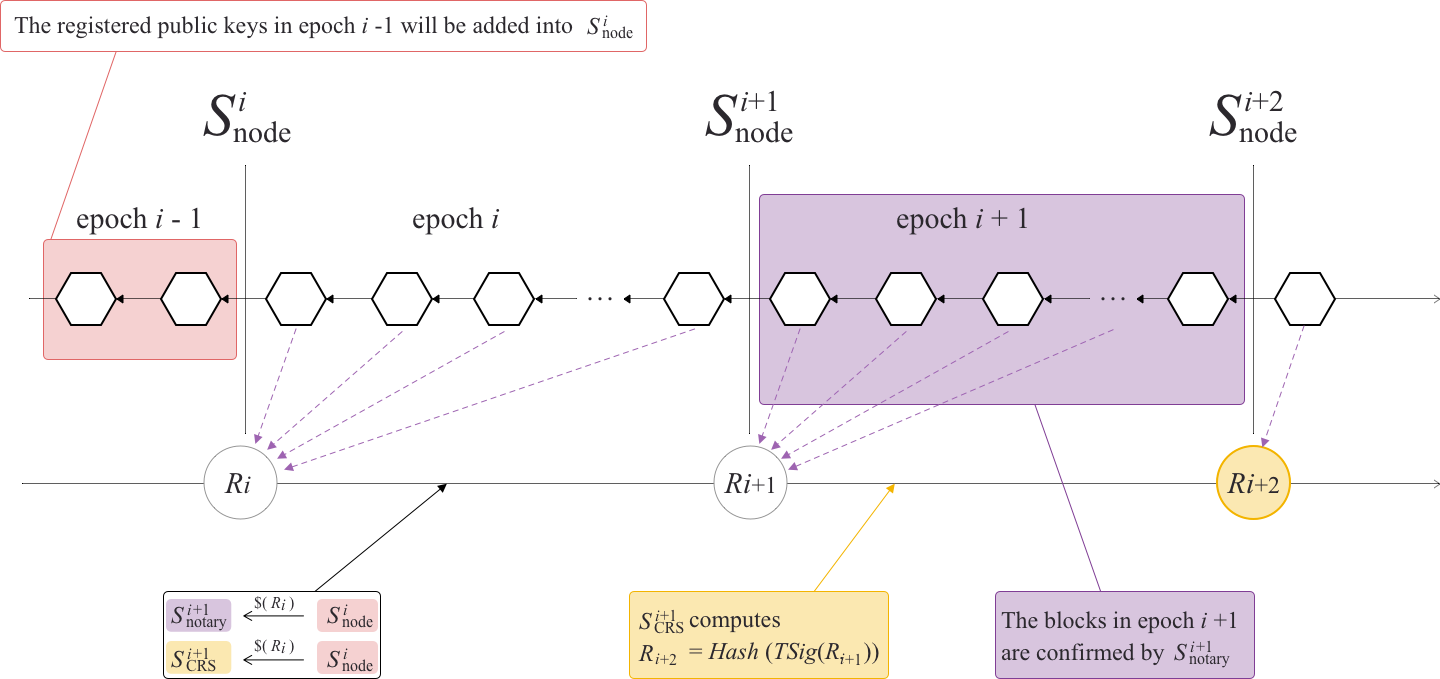}
    \caption{The consensus of a single chain}
    \label{figure:singlechain}
\end{figure}

\section{Blocklattice}
\label{section:blocklattice}

In DEXON, the blocklattice arises from numerous single chains and the ack information in the blocks of those chains.
We introduce the total ordering algorithm on the blocklattice to produce a globally-ordered chain, called the \emph{compaction chain}.

In this section, we first present our main algorithm to compact the proposed
 blocklattice into the compaction chain and then to 
generate the timestamp for each block in the compaction chain.
For the compaction, we introduce a basic total ordering algorithm in Section \ref{subsec:toto}.
In Section \ref{subsec:timestamping}, we present our timestamping algorithm that ensures the unbiased consensus timestamp of blocks in the blocklattice, which is close to real-world time. We present the consensus on a blocklattice in Section \ref{subsec:blocklattice}.

\subsection{Notation}
\label{subsec:totalNotation}

Let $\mathcal{N}$ be the set of chains and $\mathcal{F}$ be the set of chains which contains the blocks from Byzantine nodes in 
 the DAG, whose maximum capacity is limited by $f_{max} = \lfloor (|\mathcal{N}|-1)/3 \rfloor$.
Let $\mathcal{P}$ denote the set of pending blocks, which means the block is received by a node and has not been output into the compaction chain.

 When a block $A$ follows another block $B$, we call this relation $A$ \textit{acks} $B$.
Let $\mathcal{C}$ be the set of candidate blocks that only ack the blocks that are already in the compaction chain.
 Informally, $\mathcal{A}$ denotes the set of preceding
 candidate blocks that have higher priority than
 other candidate blocks, which corresponds to the ``source message'' in the TOTO protocol \cite{DolevKM93}.

 We use the subscript $(\cdot)_p$ to denote a set or 
 function in $p$'s local view, for example, $\mathcal{C}_p$
 is the candidate set in $p$'s local view.
 We use $b_{q,i}$ to denote the $i$-th block 
 from node $q$.
 %
We define \textit{indirect ack} by the transitive law of ack: if a block $A$ acks another block $B$ through the transitive law of ack, then we say $A$ indirectly acks $B$. For example, if $A$ acks $B$ and $B$ acks $C$, $A$ indirectly acks $C$.
An example is provided in Figure \ref{figure:example}.
 \begin{figure}[htpb]
     \centering    \includegraphics[width=\textwidth,height=6cm,keepaspectratio=true]{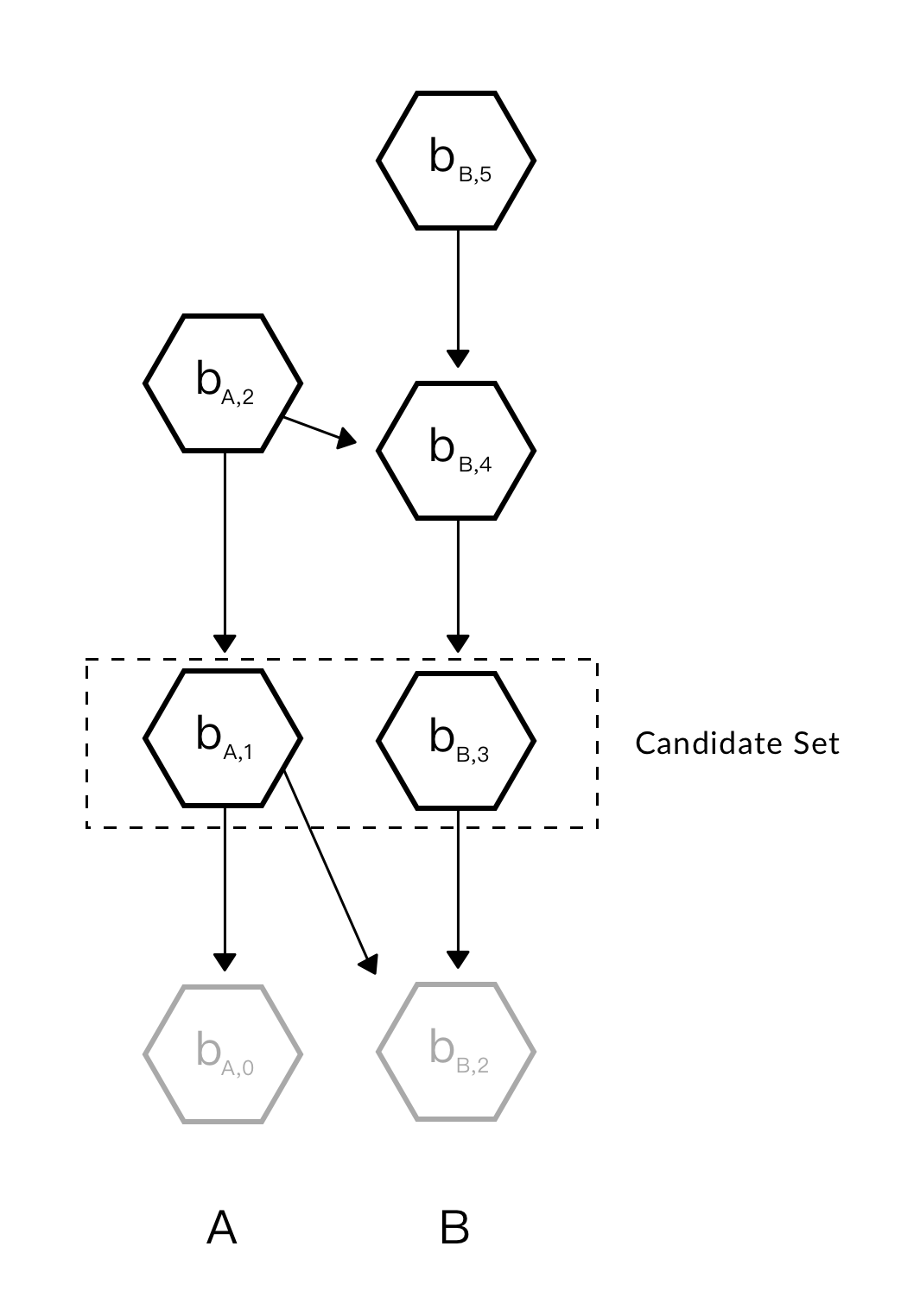}
     \caption{A simple example illustrates the notation: $\mathcal{N} = \{A, B \}$,
 $\mathcal{C} = \{b_{A,1}, b_{B,3}\}$. 
 Also, $b_{A,2}$ acks $b_{A,1}$ and $b_{B,4}$. Both $b_{A,2}$ and $b_{B,5}$ indirectly ack $b_{B,3}$.}
     \label{figure:example}
 \end{figure}

Next, we present the definition of specifications of our total ordering algorithm as the following:
\begin{enumerate}
    \item \textit{Correctness}: all non-Byzantine nodes will eventually generate the same ordered chain.
    \item \textit{Liveness}: the system will not halt with up to $f_{max}$ Byzantine nodes, and every block will eventually be finalized.
\end{enumerate}

For a formal definition, we follow the specifications in \cite{total-order-broadcast-survey}.

%
\subsection{DEXON Total Ordering Algorithm}
\label{subsec:toto}
First, we introduce the DEXON total ordering algorithm, 
which ensures that the blocklattice data structure can be compacted 
into the compaction chain.
The DEXON total ordering algorithm is described with a parameter $|\mathcal{N}|\geq\Phi>|\mathcal{N}|/2$;
it can be seen as the threshold of the output criterion.
We set $\Phi = 2f_{max} + 1$ to guarantee the liveness of the total ordering algorithm.
In this scenario, a problem arises: each node receives all block information asynchronously, thus, the order of blocks from the blocklattice cannot be decided directly by the receiving times.
Therefore, we introduce the DEXON total ordering algorithm; it 
is a symmetric algorithm and it outputs the total order for each valid block.
The DEXON total ordering algorithm is based on \cite{DolevKM93}, 
which is a weak total order algorithm \cite{total-order-broadcast-survey}. The main idea of 
the DEXON total ordering algorithm is to dynamically maintain a 
directed acyclic graph (DAG) from the received blocks. 
More precisely, each vertex corresponds to some block and 
each edge corresponds to some ack relation between blocks.
Intuitively, once a block in the graph obtains enough acks from other blocks,
the algorithm outputs the block.

Before we state the algorithm, we will first introduce some 
functions and properties for the algorithm.
We use the following three functions as potential 
functions that evaluate the quality
 of each candidate block in order to decide the output order.
\begin{enumerate}
\item  \textit{Acking Node Set}, $ANS_p(b)= \Set{j : \exists b_{j,k} \in \mathcal{P} \text{ s.t. } b_{j,k} \text{ directly or indirectly acks } b} \cup \{ p \}$, 
is the set of nodes that issued blocks ack block $b$ in node $p$'s local view. Moreover, the global Acking Node Set $ANS_p$ for node $p$ is defined by $\bigcup_{b\in \mathcal{C}_p} ANS_p(b)$.\\
\item $AHV_p(b)$ is the \textit{Acking block-Height Vector} of block $b$ in node $p$'s view, defined as:\\
\begin{center}
  $AHV_p(b)[q]=\begin{cases}
    \bot, & \text{if $q \notin  ANS_p$}\\
    k,    & \text{if $b_{q,k}$ acks $b$, 
             where $k = \min\{i : b_{q,i}\in \mathcal{P}\}$} \\
    \infty, & \text{otherwise}.
  \end{cases}$
\end{center}

\item $\#AHV_p= |\Set{j:AHV_p(b)[j]\neq \bot \text{ and } AHV_p(b)[j]\neq \infty } |$
is the number of integer elements in $AHV_p(b)$.
%
\end{enumerate}
Next, we define two functions,
\begin{center}
  $Precede_p(b_1,b_2)=\begin{cases}
    1, & \text{if $|\{j : AHV_p(b_1)[j]<AHV_p(b_2)[j]\} |>\Phi$}\\
    -1, & \text{if $|\{j : AHV_p(b_1)[j]>AHV_p(b_2)[j]\} |>\Phi$}\\
    0, & \text{otherwise},
  \end{cases}$ \\ 
\end{center}
  and 
\begin{center}
  $Grade_p(b_1,b_2) =\begin{cases}
    1, & \text{if $Precede_{p+}(b_1,b_2)=1$}\\
    0, & \text{if } |\{j : AHV_{p+}(b_1)[j]<AHV_{p+}(b_2)[j]\} | < \Phi-(n-|ANS_{p}|)\\
    \bot, & \text{otherwise},
  \end{cases}$ \\ 
\end{center}
where $(\cdot)_{p+}$ is the local view of $p$ in the future.
That is, the pending set becomes $\mathcal{P}_p \cup \mathcal{S}_p$, where $\mathcal{S}_p$ consists of all the probable blocks received by $p$ in the future.
The $Grade_p(b_1,b_2)$ function outputs the three possible relations 
between block $b_1$ and $b_2$ in the future: the first possibility is $Grade_p(b_1,b_2) = 1$, which means block $b_1$ always precedes 
block $b_2$ regardless of arbitrary following input. 
The second possibility is $Grade_p(b_1,b_2) = 0$ which means block $b_1$ cannot precede
block $b_2$ regardless of arbitrary following input.
The last possibility comprises all other relations.

We say a candidate block $b$ is \textit{preceding} if 
$\forall$ $b' \in \mathcal{C}_p$,  $Grade_p(b',b)=0$, 
and the \textit{preceding set} $\mathcal{A}$ is the set of all preceding blocks, which means that such blocks have a relatively high priority to be output by the algorithm.
Thus, 
regardless of which blocks are received afterwards, the blocks in $\mathcal{C} \setminus \mathcal{A}$
cannot precede the blocks in $\mathcal{A}$.\\
Now, we present the DEXON total ordering algorithm, which is a symmetric algorithm. 
The DEXON total ordering algorithm is an event-driven online algorithm. 
The input is one block per iteration and the algorithm produces blocks when specific criteria are satisfied. 
Regardless of the order of the inputs, the algorithm is executed by each node individually, and the algorithm outputs blocks in the same order.
The only requirement  regarding input is that the set (including acking information) of input for each node is the same eventually.  \\
The criteria consist  of two parts: the \textit{internal stability} and the \textit{external stability}. 
Informally, 
internal stability ensures each block in the preceding set has the highest priority to be output compared with all other blocks in the candidate set,
 and external stability ensures that each block in the preceding set always has
 higher priority to be output irrespective of the type of blocks received.
Let $n$ be the number of total nodes. The \emph{criteria} to output the preceding set $\mathcal{A}_p$ for node $p$ is defined as follows:
\begin{itemize}
\item Internal Stability:
  $\forall b \in \mathcal{C}_p\setminus\mathcal{A}_p, \exists b' \in \mathcal{A}_p$ s.t. $Grade_p(b',b) = 1$
\item External Stability:
\begin{itemize}  
\item (a) $|ANS_p| = n$, or
\item (b) $\exists b \in \mathcal{A}_p$ s.t. $\#AHV_p(b) > \Phi$ and \\
\phantom{lllll} $\forall b \in \mathcal{A}_p$, $|ANS_p(b)| \geq n-\Phi$
\end{itemize}
\end{itemize}

We call the output \textit{normal delivery} if it satisfies internal stability and external stability (a), and \textit{early delivery} if it satisfies internal stability and external stability (b).

Next, we simplify two \emph{criteria} by the following theorems.

\begin{theorem} [Simplifying Criterion for normal delivery]
External Stability (a) implies Internal Stability. That is,
if $|ANS_p| = n$, then, $\forall b \in \mathcal{C}_p\setminus\mathcal{A}_p, \exists b' \in \mathcal{A}_p$ s.t. $Grade_p(b',b) = 1$.
\label{thm.sim1}
\end{theorem}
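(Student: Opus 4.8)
The plan is to normalize the grading function under the hypothesis and then run an order-theoretic descent inside $\mathcal{C}_p$ to a preceding block. First I would observe that under external stability (a) we have $|ANS_p| = n$, and since the acking node set only grows while staying bounded by $n$, also $|ANS_{p+}| = n$; hence no coordinate of any $AHV_{p+}(b)$ equals $\bot$ for a candidate block $b$. Writing $\ell(x,y)$ for $|\{j : AHV_{p+}(x)[j] < AHV_{p+}(y)[j]\}|$, the index sets where $x$ is smaller, equal, or larger than $y$ partition all $n$ nodes, so $\ell(x,y) + \ell(y,x) \le n$. Substituting $n - |ANS_p| = 0$ into the definition of $Grade_p$ collapses it to a single $\Phi$-threshold: $Grade_p(x,y) = 1$ iff $\ell(x,y) > \Phi$, $Grade_p(x,y) = 0$ iff $\ell(x,y) < \Phi$, and $Grade_p(x,y) = \bot$ only in the knife-edge case $\ell(x,y) = \Phi$. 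In this language $b \in \mathcal{A}_p$ means $\ell(a,b) < \Phi$ for every $a \in \mathcal{C}_p$ (no block dominates $b$), and the goal is to produce, for each $b \in \mathcal{C}_p \setminus \mathcal{A}_p$, a preceding $b'$ with $\ell(b',b) > \Phi$.

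Next I would record the quorum-intersection facts that $\Phi = 2f_{max}+1 > n/2$ buys. Antisymmetry is immediate: if $\ell(x,y) \ge \Phi$ then $\ell(y,x) \le n - \Phi < \Phi$, so $Grade_p(y,x) = 0$; in particular the weak-precedence relation ``$\ell(x,y) \ge \Phi$'' has no $2$-cycles. Starting from a block $b \notin \mathcal{A}_p$, there is by definition some $a_1$ with $\ell(a_1,b) \ge \Phi$; if $a_1$ is itself not preceding, there is $a_2$ with $\ell(a_2,a_1) \ge \Phi$, and so on. Since $\mathcal{C}_p$ is finite and (by the numerics discussed below) this descent cannot cycle, it must terminate at a preceding block $b' \in \mathcal{A}_p$. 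The remaining step is to convert this chain of weak precedences into the single strict relation $\ell(b',b) > \Phi$ that internal stability demands, which simultaneously excludes the borderline $\bot$ case.

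I expect that last conversion to be the main obstacle, because $Precede_{p+}$, and hence $Grade_p$, is a super-majority, Condorcet-type relation and is \emph{not} transitive: the elementary bound $\ell(x,z) \ge \ell(x,y) + \ell(y,z) - n \ge 2\Phi - n$ yields $\ell(x,z) > \Phi$ only when $\Phi \ge n$, which never holds. The argument must therefore lean on the exact value $\Phi = 2f_{max}+1$ rather than on generic order theory: in the tight regime $n = 3f_{max}+1$ one has $3\Phi = 6f_{max}+3 > 6f_{max}+2 = 2n$, and since each node contributes at most $2$ to $\ell(a,b)+\ell(b,c)+\ell(c,a)$, this rules out the short majority cycles that would otherwise leave the undominated set empty and block the descent. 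Beyond cycle-freeness, composing the weak chain into a strict single-step domination of $b$ by $b'$ also seems to require structural properties of the vectors $AHV_{p+}$ inherited from the DAG (each coordinate is a chain height, so ``acks $b$'' is upward closed in height), which should constrain how the coordinatewise comparisons can disagree across $b', \dots, b$. Making this composition precise while excluding $\ell(b',b) = \Phi$ is where the real work lies; the normalization and antisymmetry steps above are routine once (a) is assumed.
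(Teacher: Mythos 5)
Your setup is sound as far as it goes --- the normalization of $Grade_p$ under $|ANS_p|=n$, the antisymmetry coming from $\Phi > n/2$, and the observation that a super-majority (Condorcet-type) relation is not transitive are all correct --- but the proposal stops exactly where the proof has to start: you explicitly leave open the step that converts a chain of weak dominations into a single strict domination $\ell(b',b) > \Phi$ by a member of $\mathcal{A}_p$, and that step is the entire content of the theorem. Two concrete problems follow. First, even the cycle-freeness your descent needs in order to terminate is not established: your counting bound $\ell(a,b)+\ell(b,c)+\ell(c,a) \le 2n < 3\Phi$ kills $3$-cycles, but for a $k$-cycle the analogous bound requires $k\Phi > (k-1)n$, which already fails at $k=4$ when $\Phi = 2f_{max}+1$ and $n = 3f_{max}+1$; so by your argument alone the descent could enter a cycle and never reach $\mathcal{A}_p$. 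Second, as you yourself compute, no composition of pairwise bounds will ever recover transitivity for a generic family of vectors, so the plan cannot be completed in the generality in which you pose it.

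The missing idea is a structural degeneracy of the $AHV$ vectors that you gesture at but do not exploit. By definition, $AHV_p(b)[q]$ is either $\infty$ or the value $k_q = \min\{i : b_{q,i} \in \mathcal{P}\}$, and $k_q$ does not depend on $b$. Hence for $q \in ANS_p$ the comparison $AHV_p(b_1)[q] < AHV_p(b_2)[q]$ holds precisely when $q$'s minimal pending block acks $b_1$ but not $b_2$, so $\ell(b_1,b_2) = |B_1 \setminus B_2|$ where $B_i$ is the set of nodes whose minimal pending blocks ack $b_i$. Consecutive difference sets such as $B_1\setminus B_2$ and $B_3\setminus B_1$ are disjoint, so $\ell(a,b) \ge \Phi$ and $\ell(c,a)\ge\Phi$ together force $2\Phi \le n$, contradicting $\Phi > n/2$: the precedence relation admits no chain of length two at all. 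Therefore any block that dominates $b$ is itself undominated by every candidate, your descent has length exactly one, and no composition or acyclicity argument is needed --- this is what the paper's terse proof is implicitly relying on when it asserts that a witness $b'$ with $Precede_p(b',b)=1$ must itself lie in $\mathcal{A}_p$. One caveat you correctly flag and should not drop: the borderline case $\ell(a,b)=\Phi$, in which $b \notin \mathcal{A}_p$ is witnessed only by $Grade_p(a,b) = \bot$ rather than $Grade_p(a,b) = 1$, is not disposed of by this argument (nor, for that matter, by the paper's own proof), so a fully rigorous version still has to exclude that case or weaken the stated conclusion.
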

\begin{proof}
First, we prove $\mathcal{A}_p \neq \emptyset$ by contradiction. 
Assume $\mathcal{A}_p = \emptyset$.
By the definition of a preceding set,
$\forall b \in \mathcal{C}_p, \exists b' \in \mathcal{C}_p$ s.t. 
$Precede_p(b',b)=1$. However, by the definition of $Precede_p$ function, $b' \in \mathcal{A}_p$, contradiction.\\
Second, assume $\exists b' \in \mathcal{C}_p\setminus \mathcal{A}_p, \forall b \in \mathcal{A}_p$ s.t. $Precede(b,b') \neq 1$.
Case 1: $\forall b'' \in \mathcal{C}_p\setminus \mathcal{A}_p$, $Precede(b'',b') \neq 1$, then $b' \in \mathcal{A}_p$, contradiction.
Case 2: $\exists b'' \in \mathcal{C}_p\setminus \mathcal{A}_p$, $Precede(b'',b') = 1$, then $b'' \in \mathcal{A}_p$, contradiction.\\
Third, if $\mathcal{A}_p = \mathcal{C}_p$, it is easy to check the correctness for the theorem. 
\end{proof}

\begin{theorem} [Simplifying Criterion for early delivery]
If both internal stability and $\mathcal{C}_p\setminus\mathcal{A}_p \neq \emptyset$ hold, then, external stability (b) also holds.
\label{thm.sim2}
\end{theorem}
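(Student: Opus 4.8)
The plan is to unfold external stability (b) into its two separate clauses and prove each from the hypotheses in a different way. The first is the \emph{existence} claim, $\exists b \in \mathcal{A}_p$ with $\#AHV_p(b) > \Phi$; the second is the \emph{universal} claim, $\forall b \in \mathcal{A}_p$, $|ANS_p(b)| \geq n - \Phi$. Throughout I would rely on the informal reading attached to the future view: $Grade_p(b_1,b_2)=1$ counts only coordinates on which $b_1$ \emph{already} beats $b_2$ in the current pending set (future acks of a node that has not yet appeared cannot be guaranteed), while $Grade_p(b_1,b_2)=0$ bounds the best-case count of such coordinates after the $n-|ANS_p|$ as-yet-unseen nodes are allowed to favor $b_1$.

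For the existence clause I would use the hypothesis $\mathcal{C}_p \setminus \mathcal{A}_p \neq \emptyset$ to pick some $b_0 \in \mathcal{C}_p \setminus \mathcal{A}_p$. Internal stability then supplies a preceding block $b' \in \mathcal{A}_p$ with $Grade_p(b',b_0)=1$, i.e. $|\{j : AHV_{p+}(b')[j] < AHV_{p+}(b_0)[j]\}| > \Phi$. The key observation is that every coordinate $j$ in this set must have $AHV_{p+}(b')[j]$ finite, since neither $\bot$ nor $\infty$ can be strictly smaller than another entry; moreover, because a node outside the current acking set could in some future ack $b_0$ before $b'$, a \emph{guaranteed} strict inequality forces $j$'s lowest pending block to already ack $b'$ in the present view. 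Hence this set is contained in the set counted by $\#AHV_p(b')$, giving $\#AHV_p(b') > \Phi$ and establishing the clause with witness $b'$.

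For the universal clause I would fix an arbitrary $b \in \mathcal{A}_p$ and bound the number of nodes failing to ack $b$. Since $b$ is preceding, every candidate $b''$ satisfies $Grade_p(b'',b)=0$, so no single competitor beats $b$ on as many as $\Phi - (n-|ANS_p|)$ coordinates. Applying this to the witness $b'$ from the first clause, for which the set $S^* := \{j : \#AHV_p\text{-counted for }b'\}$ has $|S^*| > \Phi$, shows that all but fewer than $\Phi - (n-|ANS_p|)$ of the nodes in $S^*$ also ack $b$ (a node of $S^*$ not acking $b$ contributes a finite-versus-$\infty$ coordinate to the $Grade_p(b',b)=0$ count); the $n-|ANS_p|$ nodes outside the global acking set are meanwhile meant to be absorbed by the correction term. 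The goal is to combine these counts into the bound $n - |ANS_p(b)| \leq \Phi$, equivalently $|ANS_p(b)| \geq n - \Phi$.

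I expect the universal clause to be the main obstacle. The difficulty is that the nodes which do not ack $b$ need not all prefer a single competitor: they may be spread across many candidates in $\mathcal{C}_p \setminus S^*$, each individually beating $b$ on too few coordinates to contradict $b$'s precedence, so a naive ``exhibit one dominating competitor'' argument fails and a union bound over candidates loses a factor of $|\mathcal{C}_p|$. Indeed, controlling only the $S^*$-nodes and the unseen nodes leaves the nodes of $ANS_p \setminus S^*$ unaccounted for, and a direct estimate stalls at $|ANS_p(b)| > n - |ANS_p|$, which is weaker than the target whenever $|ANS_p| > \Phi + 1$. Closing this gap seems to require exploiting the precise construction of the future view $(\cdot)_{p+}$ together with the exact role of the correction term $n-|ANS_p|$, so that the unseen nodes, the $S^*$-nodes, and the remaining acking nodes partition cleanly; pinning down this bookkeeping is where the real work lies.
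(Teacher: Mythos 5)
Your treatment of the existence clause is correct and is exactly the paper's route: take any $b_0\in\mathcal{C}_p\setminus\mathcal{A}_p$, let internal stability supply $b'\in\mathcal{A}_p$ with $Grade_p(b',b_0)=1$, and observe that every coordinate witnessing a strict inequality that is guaranteed under \emph{all} futures must already be an integer entry of $AHV_p(b')$ in the present view, whence $\#AHV_p(b')>\Phi$. The paper asserts this in one line; your justification of why the $>\Phi$ coordinates must survive into the current view is the right one.

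The universal clause is where your proposal stops, and the gap you flag is genuine. From $Grade_p(b',b)=0$ one extracts only $|\{j: AHV_p(b')[j]<AHV_p(b)[j]\}| < \Phi-(n-|ANS_p|)$, and combining this with $\#AHV_p(b')>\Phi$ yields $|ANS_p(b)|>n-|ANS_p|$, which is strictly weaker than the target $n-\Phi$ precisely because $\#AHV_p(b')>\Phi$ forces $|ANS_p|>\Phi$. You should know that the paper's own proof does not close this gap either: it argues by contradiction from $|ANS_p(b)|<n-\Phi$, obtains a current count $>2\Phi-n$, and then asserts the count ``can reach more than $(2\Phi-n)+(n-\Phi)=\Phi$'' once all nodes' blocks arrive. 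But the only coordinates that can newly start favoring $b'$ in the future are the $n-|ANS_p|$ nodes outside $ANS_p$ (for a node already in $ANS_p$ the entry is pinned to its minimum pending block and cannot change), so the legitimate increment is $n-|ANS_p|<n-\Phi$, the sum stays below $\Phi$, and the claimed contradiction with $b$ being preceding evaporates --- exactly the stall you describe. Indeed, with $|ANS_p|=n$ one can arrange a preceding block acked by fewer than $n-\Phi$ chains while internal stability and $\mathcal{C}_p\setminus\mathcal{A}_p\neq\emptyset$ both hold, so no amount of bookkeeping on $(\cdot)_{p+}$ alone will rescue the step; it needs either an extra invariant of the blocklattice or a restatement of external stability (b) with $n-|ANS_p|$ in place of $n-\Phi$. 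Your diagnosis of where the real work lies is accurate, but as it stands neither your argument nor the paper's proves the universal clause.
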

\begin{proof}
It is easy to prove $\mathcal{A} \neq \emptyset$. Let $b' \in \mathcal{A}$ be the block that precedes a block, then $\#AHV_p(b') > \Phi$ holds.
Next, assume $\exists b \in \mathcal{A}$ s.t. $|ANS_p(b)| < n-\Phi$. Then, $\#AHV_p(b) < n-\Phi$ holds.
$|\{j|AHV_p(b')[j]<AHV_p(b)[j]\} |>\Phi - (n-\Phi) = 2\Phi-n$ and for the case $|ANS_p|>\Phi$. 
When the blocks issued by all nodes have been received, $|\{j|AHV_p(b')[j]<AHV_p(b)[j]\} |$ can reach more than $(2\Phi-n)+(n-\Phi) = \Phi$. This means $b$ is preceded by $b'$.
Thus, $b \notin \mathcal{A}$, contradiction.
Therefore,  $\forall b \in \mathcal{A}$ s.t. $|ANS_p(b)| \geq n-\Phi$.
\end{proof}

Hence, the $criteria$ are simplified into the following conditions:

\begin{itemize}
\item Normal Delivery: 
    $|ANS_p| = n$
\item Early Delivery: 
    $\forall b \in \mathcal{C}_p\setminus\mathcal{A}_p, \exists b' \in \mathcal{A}_p$ s.t. $Grade_p(b',b) = 1$ and $\mathcal{C}_p\setminus\mathcal{A}_p \neq \emptyset$
\end{itemize}

\begin{algorithm}[htbp]
\caption{DEXON Total Ordering Algorithm}
\label{algorithm:toto}
\Fn{DEXON\_Total\_Ordering \text{for node} p}{
    \Input{a block $b_{q,i}$ from node $q$ and its acking information per iteration}
    \Output{ordered block series}
\When( receiving a block $b_{q,i}$, ){}{ 
 \If{$b_{q,i}$ only acks the blocks that have been output,}{
        $\mathcal{C}_p = \mathcal{C}_p \cup \{ b_{q,i}\}$ \\
        \ForEach{$ r \in ANS_p$}{
            $AHV_p(b_{q,i})[r]=\infty$} 
        }
 \If{$q \notin  ANS_p$,}{
        \ForEach{ $b \in \mathcal{C}_p$ }{ 
            \If{$b$ is directly or indirectky acked by $b_{q,i}$,}{
            $AHV_p(b)[q]=i$, \\
            $ANS_p(b)=ANS_p(b)\cup \{ q \}$
            }
        \Else{
            $AHV_p(b)[q]=\infty$}
         }
        }
    }
\When( $criteria_p$ holds, ){}{
        output $\mathcal{A}_p$ in the lexicographical order of the hash value of each block\\
        //update $\mathcal{A}_p$ and $\mathcal{C}_p$\\
        $\mathcal{C}_p = \mathcal{C}_p \setminus \mathcal{A}$ \\
        $\mathcal{A}_p \leftarrow \emptyset$   \\
        \ForEach{$b \in \mathcal{P}_p$ only acks the blocks that have been output}{
          $\mathcal{C}_p = \mathcal{C}_p \cup \{ b\}$ 
          }
        \ForEach{$ b \in \mathcal{C}_p$}{ 
            compute $AHV_q(b), ANS_q(b)$}
        \ForEach{$ b \in \mathcal{C}_p$}{ 
            \If{$b$ is preceding,}{
                $\mathcal{A}_p = \mathcal{A}_p \cup \{ b \}$
                }
            }
    }
}
\end{algorithm}
The DEXON total ordering algorithm is listed as Algorithm \ref{algorithm:toto}.
There are two events that must be considered: one is a block received and the other is that some criteria are satisfied.
When a block $b_{q,i}$ is received, the algorithm updates its potential function
of candidate blocks according to the ack information from $b_{q,i}$.
If $b_{q,i}$ only acks the blocks that have been output, it is a candidate block.
Then, the algorithm updates $AHV_p(b_{q,i})$ according to $ANS_p$.
Otherwise, $b_{q,i}$ is not a candidate. If node $q$ has never been in
node $p$'s view, each candidate block updates its potential function.
If node $q$ has been in node $p$'s view, there must exist a $j < i$ s.t.
$b_{q,j} \in \mathcal{P}$. 
Thus, both potential functions $AHV_p, ANS_p$ for all candidate blocks
would not change and the algorithm would do nothing.
When the second event occurs, this means that the potential functions of the 
preceding candidate blocks are adequate and the preceding candidate can be
output.
After the algorithm outputs, it continues to collect the next candidate blocks
and to update their potential functions. 
\begin{figure}[htpb]
    \centering    \includegraphics[width=\textwidth,height=6cm,keepaspectratio=true]{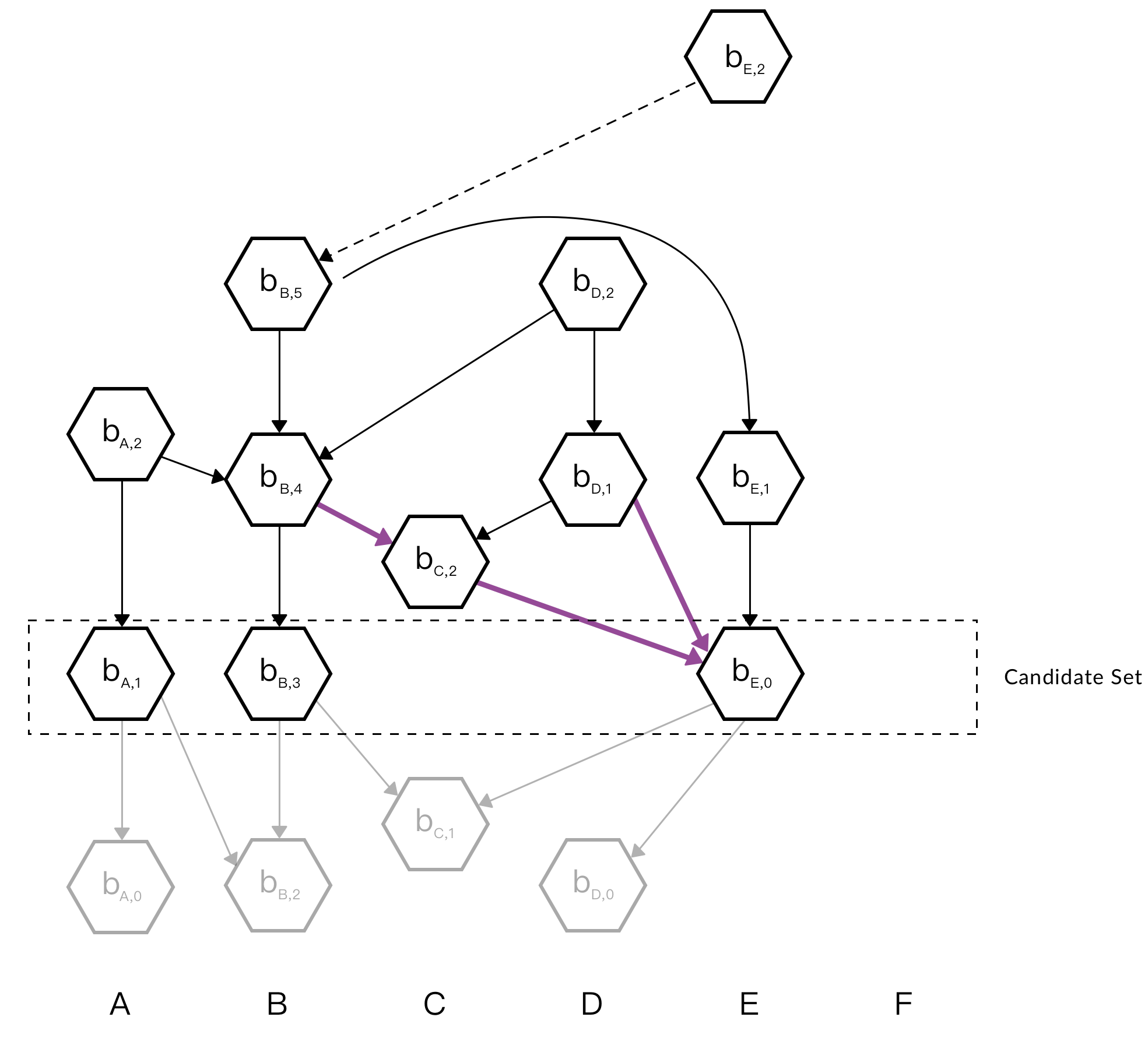}
    \caption{Example of the DEXON total ordering algorithm: 
if this is in node $C$'s local view,
$\mathcal{C} = \{b_{A,1}, b_{B,3}, b_{E,0}\}$,
$\mathcal{A} = \{b_{A,1}, b_{B,3}, b_{E,0}\}$,
$AHV_C(b_{A,1}) = (1, \infty,\infty,\infty,\infty,\bot)$,
$AHV_C(b_{B,3}) = (\infty,3,\infty,\infty,\infty,\bot)$,
$AHV_C(b_{E,0}) = (\infty,\infty,2,1,0,\bot)$,
$\#AHV_C(b_{A,1}) = 1$,
$\#AHV_C(b_{B,3}) = 1$,
$\#AHV_C(b_{E,0}) = 3$,
$ANS_C(b_{A,1}) = \{ A\}$,
$ANS_C(b_{B,3}) = \{ A,B,D\}$,
$ANS_C(b_{E,0}) = \{ A,B,C,D,E\}$
}
    \label{figure:total}
\end{figure}

\subsubsection{Correctness and Liveness}
%
\begin{lemma} [Consistency of AHV]
The first time the criterion holds for node $p,q$,
$AHV_p(b)$ and $AHV_q(b)$ are consistent for any block $b \in \mathcal{C}_p \cap \mathcal{C}_q$.
That is, for each node $r$, if $AHV_p(b)[r] \neq \bot$ and $ AHV_q(b)[r] \neq \bot$, then
$AHV_p(b)[r]=AHV_q(b)[r]$.
\label{thm.ahv}
\end{lemma}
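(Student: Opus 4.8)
The plan is to fix a node $r$ for which both $AHV_p(b)[r]\neq\bot$ and $AHV_q(b)[r]\neq\bot$, and to show the two entries coincide by reducing everything to a single, unambiguously-defined block of chain $r$. Since $AHV_p(b)[r]\neq\bot$ forces $r\in ANS_p$, node $p$ has already received at least one pending block from $r$; likewise for $q$. Write $k=\min\{i:b_{r,i}\in\mathcal{P}_p\}$ and $k'=\min\{i:b_{r,i}\in\mathcal{P}_q\}$, so that by definition $AHV_p(b)[r]\in\{k,\infty\}$ and $AHV_q(b)[r]\in\{k',\infty\}$, the finite value occurring exactly when the corresponding first pending block acks $b$.

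The heart of the argument is the claim that $k=k'$ and that $b_{r,k}$ and $b_{r,k'}$ are the \emph{same} block. First I would record two monotonicity facts about a single chain: a node cannot receive $b_{r,i}$ without having received $b_{r,i-1}$ (each block acks its predecessor on the same chain), and a block can become a candidate, and hence be output, only after all of its chain-predecessors are already output (a candidate acks only already-output blocks). Consequently, for either node the blocks of chain $r$ that have been output form an index-prefix, and the first pending block is the one immediately following that prefix. At the first time the criterion holds for both $p$ and $q$, the blocks delivered so far are identical (trivially so when this is the first delivery), so the two output-prefixes of chain $r$ agree and therefore $k=k'$; since a block is fixed by its hash and both nodes hold the same block at that position, $b_{r,k}=b_{r,k'}$.

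With the block pinned down, I would finish by observing that ``directly or indirectly acks $b$'' is an intrinsic property of $b_{r,k}$ together with its ancestor set, not of the observer: to have accepted $b_{r,k}$ a node must already hold all of its ancestors, and $b\in\mathcal{C}_p\cap\mathcal{C}_q$ is common to both views, so the reachability of $b$ from $b_{r,k}$ in the ack-DAG is decided identically by $p$ and $q$. Hence $b_{r,k}$ acks $b$ in $p$'s view iff it does in $q$'s view: in the first case both entries equal $k=k'$, in the second both equal $\infty$, and the entries agree as required.

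The step I expect to be the \textbf{main obstacle} is justifying $k=k'$, i.e. that the two nodes have output exactly the same prefix of chain $r$ by the moment the criterion first fires. In the genuinely first delivery this is free (nothing has been output), but under an inductive reading it rests on the correctness established for earlier deliveries, so I would take care to position this lemma as the base/step of that induction rather than assume its conclusion. The second delicate point is a Byzantine $r$ that equivocates, producing two distinct blocks sharing the index $k$: ruling this out requires appealing to the underlying reliable-broadcast/gossip guarantee that two correct nodes never adopt conflicting blocks at the same chain position before delivery, and I would flag that assumption explicitly instead of burying it inside the identity $b_{r,k}=b_{r,k'}$.
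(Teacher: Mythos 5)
Your proof is correct and follows essentially the same route as the paper's: both arguments reduce the claim to the minimal pending block of chain $r$ and use the causality of acks within a single chain to show that this block, and hence whether it acks $b$, is identical in both views. The paper phrases this as a proof by contradiction with a case split on whether $AHV_q(b)[r]$ is $\infty$ or a different integer, whereas you argue directly and make explicit the two assumptions the paper compresses into the phrase ``causality of blocks in the blocklattice'' --- prefix-closedness of reception and output along each chain, and non-equivocation at a given height --- which is a more careful rendering of the same idea.
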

\begin{proof}
Let $r \in \mathcal{N}$ satisfy $AHV_p(b)[r] \neq \bot$ and $ AHV_q(b)[r] \neq \bot$.
Consider that if both functions $AHV_p(b)[r] = AHV_p(b)[r] = \infty$ hold, then this lemma is correct.
Without loss of generality, assume $AHV_p(b)[r] = k $, $AHV_q(b)[r] = k'$ or $\infty$, for some integer $k' \neq k$, because node $r \in \mathcal{N}_q$.
Case 1: $AHV_q(b)[r] = \infty$, let $i$ be the minimum number s.t. $b_{r,i} \in \mathcal{P}_q$. If $i<k$, $AHV_p(b)[r] < k $, which leads to a contradiction. If $i>k$, $b_{r,k}$ must be in $\mathcal{P}_q$.
By the causality of blocks in the blocklattice, $i$ must be equivalent to $k$, which leads to a contradiction.
Case 2: $AHV_q(b)[r] = k'$, the argument is similar to case 1, thus, by the causality of blocks in the blocklattice, $k'=k$, which leads to a contradiction.
\end{proof}

\begin{lemma} [Consistency of Grade]
The first time the criterion holds for node $p,q$,
$Grade_p(b_1,b_2)$ and $Grade_q(b_1,b_2)$ are consistent for any two blocks $b_1, b_2 \in \mathcal{C}_p \cap \mathcal{C}_q$.
That is, $Grade_p(b_1,b_2)=Grade_q(b_1,b_2)$.
\label{thm.grade}
\end{lemma}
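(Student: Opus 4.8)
The plan is to collapse each $Grade_p(b_1,b_2)$ to a single integer statistic, to compare those statistics across $p$ and $q$ through Lemma \ref{thm.ahv}, and then to pin the value down using the fact that every node's view eventually grows into one common completed DAG.

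First I would record how a coordinate of $AHV_p(b)$ evolves as $p$ receives more blocks. For a node $j\in ANS_p$ the entry $AHV_p(b)[j]$ is frozen: by the causality of blocks in the blocklattice the minimum-index block of $j$ is already pending, so whether it acks $b$ — and hence the value, an integer or $\infty$ — never changes. A coordinate $j\notin ANS_p$ is $\bot$ now and, once $j$ is seen, turns into exactly one fixed integer-or-$\infty$. Writing $c_p=|\{\,j\in ANS_p : AHV_p(b_1)[j]<AHV_p(b_2)[j]\,\}|$ (coordinates with $j\notin ANS_p$ are $\bot$ in both vectors and never count), the number $c_p$ is therefore nondecreasing as the view grows and rises by at most one per newly seen node. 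Unpacking the definition, $Grade_p(b_1,b_2)=1$ is exactly the worst-case count exceeding the threshold, i.e. $c_p>\Phi$, while $Grade_p(b_1,b_2)=0$ is exactly the best-case count falling short, i.e. $c_p+(n-|ANS_p|)<\Phi$; every other situation gives $\bot$.

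Next I would exploit convergence. Let $c^{*}$ be the count in the unique completed view where every node has seen every block. By the monotonicity above, $c_p\le c^{*}\le c_p+(n-|ANS_p|)$, and the same bracket holds for $q$. Hence $Grade_p(b_1,b_2)=1$ forces $c^{*}\ge c_p>\Phi$, whereas $Grade_p(b_1,b_2)=0$ forces $c^{*}\le c_p+(n-|ANS_p|)<\Phi$. Since $c^{*}$ is view-independent, any two \emph{definite} grades produced by $p$ and $q$ must agree: a $1$ and a $0$ cannot coexist, as they would require $c^{*}>\Phi$ and $c^{*}<\Phi$ at once. Lemma \ref{thm.ahv} is precisely what makes $c^{*}$ meaningfully common to both nodes, since it guarantees that for every $j\in ANS_p\cap ANS_q$ the predicate $AHV_p(b_1)[j]<AHV_p(b_2)[j]$ is identical in the two views, so $p$ and $q$ are reading off the same DAG on their shared coordinates.

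The one remaining case — and the one I expect to be the crux — is a \emph{definite versus $\bot$} disagreement, say $Grade_p(b_1,b_2)=1$ while $Grade_q(b_1,b_2)=\bot$; the counting bounds above are consistent with this, so the statement genuinely fails without using that the \emph{criterion} holds at the moment of comparison. Here I would invoke the first-output criterion. If a node delivers normally, then by (external stability (a), cf.\ Theorem \ref{thm.sim1}) $|ANS|=n$ for it, its bracket $[c,\,c+(n-|ANS|)]$ collapses to the single point $c^{*}$, and its grade is then forced by $c^{*}$ alone (definite unless $c^{*}=\Phi$, in which case both nodes return $\bot$), eliminating the mismatch. For early delivery I would use Theorem \ref{thm.sim2} / external stability (b): it yields a preceding block with $\#AHV>\Phi$, whence $|ANS_p|>\Phi$ and $|ANS_q|>\Phi$, so both brackets are short; the hard part is then to turn this length bound, together with the anchoring of both brackets around the common value $c^{*}$ provided by Lemma \ref{thm.ahv}, into the statement that the two brackets lie on the same side of $\Phi$, so one node cannot be definite while the other stays undecided. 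Combining the normal/early split with the trivial both-$\bot$ case completes the argument.
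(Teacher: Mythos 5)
Your counting framework is sound as far as it goes: reducing each grade to the statistic $c_p$, bracketing the completed-view count $c^{*}$ by $c_p\le c^{*}\le c_p+(n-|ANS_p|)$ via Lemma \ref{thm.ahv}, and observing that a grade of $1$ at one node and $0$ at the other would force $c^{*}>\Phi$ and $c^{*}<\Phi$ simultaneously. But the lemma asserts $Grade_p(b_1,b_2)=Grade_q(b_1,b_2)$ outright, including the value $\bot$, and the definite-versus-$\bot$ case that you yourself flag as the crux is left open. The route you sketch for closing it does not work as stated: external stability (b) gives $|ANS|>\Phi$, hence a bracket of length $n-|ANS|<n-\Phi$, and since $\Phi>n/2$ this only bounds the length by something smaller than $\Phi$ --- nowhere near enough to force both brackets onto the same side of $\Phi$. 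Concretely, with $c^{*}=\Phi+1$ a node with $|ANS|=n$ returns $1$, while a node with $c=\Phi-1$ and $n-|ANS|=3$ returns $\bot$, and nothing in your argument excludes this configuration at the moment the criterion first holds. So the proof is genuinely incomplete at exactly the point where the lemma's full strength is needed.

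For comparison, the paper does not argue by cases at all: it notes that $Grade$ is defined over the future view $(\cdot)_{p+}$, that the partial DAGs of $p$ and $q$ converge to the same completed DAG, and that Lemma \ref{thm.ahv} makes the $AHV$ vectors consistent on shared coordinates, and then concludes equality of grades in one line. On that reading the $1$-valued clause ($Precede_{p+}=1$) is a property of the common limit rather than of each node's current view, which dissolves most of the case analysis (though the paper leaves the dependence of the $0$-clause threshold on the current $|ANS_p|$ unexamined). To repair your version you must either show that, at the first time the criterion holds, the two brackets cannot straddle $\Phi$ differently --- which your bounds do not yet establish --- or follow the paper and argue directly at the level of the common future view.
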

\begin{proof}
Because both of the local views of nodes $p$ and $q$ are partial DAGs, they will be the same eventually. By Lemma \ref{thm.ahv}, the output of function $AHV(b)$ of node $p$ and $q$ are consistent for every block $b \in \mathcal{C}_p \cap \mathcal{C}_q$. Thus, $Grade_p(b_1,b_2)=Grade_q(b_1,b_2)$.
\end{proof}

\begin{lemma} [Arrival of the Preceding set]
When the first time the criterion holds for node $p,q$, then $\mathcal{A}_p \subset \mathcal{C}_q$ and $\mathcal{A}_q \subset \mathcal{C}_p$ holds.
\label{thm.arrival}
\end{lemma}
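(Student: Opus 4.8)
The plan is to reduce the set inclusion to a statement purely about message reception, and then close it with a counting argument driven by the output criterion. The key observation is that at the \emph{first} time the criterion holds, neither $p$ nor $q$ has produced any output, so their compaction chains coincide (both empty). Consequently, membership in the candidate set is view-independent up to reception: a block $b$ lies in $\mathcal{C}_q$ exactly when $q$ has received $b$ and every block $b$ acks has already been output. Since every $b \in \mathcal{A}_p \subseteq \mathcal{C}_p$ already satisfies the ack condition (it is a \emph{source} block relative to the common empty output), proving $\mathcal{A}_p \subset \mathcal{C}_q$ reduces to showing that \emph{$q$ has received every $b \in \mathcal{A}_p$}. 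First I would record the causal fact that makes this tractable: if $q$ has not received such a source block $b$, then by the causality of the blocklattice $q$ has received none of $b$'s descendants (any block that directly or indirectly acks $b$ would, upon receipt, drag $b$ into $q$'s view). In particular, if $b = b_{s,0}$, then $q$ has received \emph{no} block of chain $s$, so $s \notin ANS_q$.

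Next I would extract the quantitative strength of ``$b$ is preceding'' in $p$'s view. By the simplified criterion (Theorem \ref{thm.sim2}) the early-delivery branch gives $|ANS_p(b)| \geq n - \Phi$ for every $b \in \mathcal{A}_p$, while the normal-delivery branch gives $|ANS_p| = n$ directly; either way each preceding block of $p$ is acked, in $p$'s view, by a large set of chains. I would then argue by cases on $q$'s own criterion. If $q$ outputs by \emph{normal delivery}, then $|ANS_q| = n$, so $s \in ANS_q$ for the issuer of any $b = b_{s,0} \in \mathcal{A}_p$; combined with the causal fact above (which forces $s \notin ANS_q$ whenever $q$ misses $b$), this is an immediate contradiction, and $q$ must have received $b$.

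The substantive case is when $q$ outputs by \emph{early delivery}. Here $q$ need not have seen all $n$ chains, so missing one particular chain $s$ is not by itself contradictory; the contradiction must come from combining the acking strength of $b$ in $p$'s view with the witness $\hat b \in \mathcal{A}_q$ guaranteed by external stability (b), which satisfies $\#AHV_q(\hat b) > \Phi$ and $|ANS_q(b')| \geq n - \Phi$ for all $b' \in \mathcal{A}_q$. Assuming $q$ has not received $b$, the causal fact pins $q$'s frontier on every chain in $ANS_p(b)$ strictly below the block that acks $b$. Using $\Phi = 2f_{max}+1$ and $n \geq 3f_{max}+1$, a counting argument on the two large sets (the $\geq n-\Phi$ ackers of $b$ in $p$'s view and the $> \Phi$ chains whose frontier blocks ack $\hat b$ in $q$'s view) should produce a chain on which $q$'s frontier is simultaneously forced below and at-or-above the acking height of $b$, contradicting the assumption; hence $q$ has received $b$. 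The symmetric inclusion $\mathcal{A}_q \subset \mathcal{C}_p$ then follows by interchanging the roles of $p$ and $q$.

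I expect the early-delivery case to be the main obstacle. A naive node-level intersection is tempting — since $|ANS_p(b)| + \#AHV_q(\hat b) > (n-\Phi) + \Phi = n$, the two sets share a chain $j$ — but sharing a chain is not enough: $q$ may hold only an \emph{older} block of $j$ that does not yet ack $b$, so causality gives nothing. The real work is therefore to upgrade this node-level overlap to a \emph{block-height} overlap, i.e. to show the shared chain's witnessed block in $q$ is high enough to ack $b$ (using the monotonicity of the $AHV$ indices and, where needed, the consistency of $AHV$ on common candidates from Lemma \ref{thm.ahv}). A secondary subtlety is Byzantine behaviour: up to $f_{max}$ of the $n-\Phi$ ackers may be faulty, so the counting must be arranged to survive the removal of $f_{max}$ chains, which is exactly the slack that $\Phi = 2f_{max}+1$ against $n \geq 3f_{max}+1$ is there to provide.
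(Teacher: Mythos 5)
Your skeleton is the same as the paper's: reduce to ``$q$ has received every $b\in\mathcal{A}_p$'' (legitimate at the first criterion time, since both output prefixes are empty and the ack-condition for candidacy is therefore view-independent), dispatch the normal-delivery case from $|ANS_q|=n$ plus causality, and attack the early-delivery case by intersecting the $\geq n-\Phi$ ackers of $b$ in $p$'s view with the $>\Phi$ chains witnessing $\#AHV_q(\hat b)>\Phi$. The paper's Case 3 is exactly this pigeonhole: it produces $r\in ANS_p(b)\cap ANS_q(b')$ and then asserts ``$b$ and $b'$ are received if the blocks issued by $r$ are received,'' i.e.\ it leans on causal closure of reception along chain $r$. (The paper's Case 2 differs cosmetically: it proves $\mathcal{A}_q\subset\mathcal{A}_p$ via Lemma \ref{thm.ahv} rather than re-running the inclusion argument; your per-direction case split on $q$'s delivery mode subsumes that case more uniformly.)

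The genuine gap is that your early-delivery case is a plan, not a proof. You correctly diagnose that the node-level intersection is insufficient --- $q$ may hold only a block of the shared chain $r$ that is too low to ack $b$ --- and you state that a counting argument ``should'' force a chain on which $q$'s frontier is simultaneously below and at-or-above the height at which $r$ first acks $b$. But you never exhibit that chain or derive the two height inequalities, and it is not obvious the count closes: $|ANS_p(b)|+\#AHV_q(\hat b)>n$ gives overlap of the two chain-sets but leaves no slack to additionally control heights, and discarding up to $f_{max}$ Byzantine chains consumes precisely the margin $\Phi=2f_{max}+1$ provides. What is missing is the monotonicity step: a statement that once some block $b_{r,k}$ of a correct chain $r$ acks $b$, every block $b_{r,k'}$ with $k'\geq k$ indirectly acks $b$, together with an argument that the block of $r$ that $q$ provably holds (the one making $AHV_q(\hat b)[r]$ an integer, namely $r$'s minimum pending block in $q$'s view, or some higher received block) has index $\geq k$. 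Until that is pinned down, the contradiction you announce does not materialize. To be fair, the paper's own Case 3 asserts the conclusion at exactly the point where you stop, so your proposal is no weaker than the published argument --- but judged as a standalone proof it does not close the case it itself identifies as the crux.
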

\begin{proof}
We have three cases: first, node $p,q$ are normal delivery; second, one node is normal delivery and the other is early delivery; third, both are early delivery.
Case 1: we have $|ANS_p|=n$, so $\mathcal{A}_q \subset \mathcal{C}_q \subset \mathcal{C}_p$ and $|ANS_q|=n$. Consequently, $\mathcal{A}_p \subset \mathcal{C}_p \subset \mathcal{C}_q$. The lemma holds in this case.

Case 2: Without loss of generality, we assume $|ANS_q|=n$ and early deliver happened in node $p$.
First, $\mathcal{A}_p \subset \mathcal{C}_p \subset \mathcal{C}_q$ because $|ANS_q|=n$.
Second, we prove $\mathcal{A}_q \subset \mathcal{A}_p$ by contradiction.
Assume there exists a block $b \in \mathcal{A}_q\setminus\mathcal{A}_p$.
By the definition of the preceding set, $\exists b' \in \mathcal{A}_p$, s.t. $Precede_p(b',b)=1$.
However, because $b \in \mathcal{A}_q, \forall b'' \in \mathcal{A}_q, Precede_q(b'',b)=0$.
By the Lemma \ref{thm.ahv}, $b' \in \mathcal{C}_q$ is also present in the preceding set, and  $Precede_q(b',b)=1$, which leads to a contradiction.
This means $\mathcal{A}_q \subset \mathcal{A}_p \subset \mathcal{C}_p$.

Case 3: $\forall b \in \mathcal{A}_p$, let $b' \in \mathcal{A}_q$ be the block with $\#AHV_q(b')>\Phi$. Because $|ANS_q(b')|>\Phi$ and $|ANS_p(b)| \geq n-\Phi$, there must exist a node $r \in ANS_p(b) \cap ANS_q(b')$ by the pigeonhole principle. 
Thus, $b$ and $b'$ are received if the blocks issued by $r$ are received; this means $b \in \mathcal{C}_q$.
Furthermore, $\mathcal{A}_q \subset \mathcal{C}_p$ holds as well.
\end{proof}

\begin{theorem} [General Correctness for total ordering]
Let $\mathcal{A}_p^i, \mathcal{A}_q^i$ be the $i$-th set output by node $p,q$,
then $\mathcal{A}_p^i = \mathcal{A}_q^i$ holds.
\label{thm.output}
\end{theorem}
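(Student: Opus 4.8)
The plan is to prove the statement by induction on the output index $i$, using the three preceding lemmas to handle the first output and then arguing that after identical prefixes have been emitted the two nodes are placed back into a symmetric situation, so the same argument applies verbatim to the next output.

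For the base case $i=1$, I would show $\mathcal{A}_p^1 \subseteq \mathcal{A}_q^1$ and invoke symmetry for the reverse inclusion. Fix $b \in \mathcal{A}_p^1$. By Lemma~\ref{thm.arrival} we have $b \in \mathcal{C}_q$, so either $b \in \mathcal{A}_q^1$ (and we are done) or $b \in \mathcal{C}_q \setminus \mathcal{A}_q^1$. In the latter case, internal stability at $q$ yields a block $b' \in \mathcal{A}_q^1$ with $Grade_q(b',b)=1$. Lemma~\ref{thm.arrival} also gives $b' \in \mathcal{A}_q^1 \subseteq \mathcal{C}_p$, so $b,b' \in \mathcal{C}_p \cap \mathcal{C}_q$ and Lemma~\ref{thm.grade} forces $Grade_p(b',b) = Grade_q(b',b) = 1$. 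But $b \in \mathcal{A}_p^1$ means $b$ is preceding in $p$'s view, i.e.\ $Grade_p(b'',b)=0$ for every $b'' \in \mathcal{C}_p$, in particular for $b'$ --- a contradiction. Hence $b \in \mathcal{A}_q^1$, giving $\mathcal{A}_p^1 \subseteq \mathcal{A}_q^1$; swapping the roles of $p$ and $q$ gives equality.

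For the inductive step, assume $\mathcal{A}_p^j = \mathcal{A}_q^j$ for all $j \le i$. The key observation is that, having output the same sets $\mathcal{A}^1,\dots,\mathcal{A}^i$, nodes $p$ and $q$ agree on exactly which blocks have already been output, and therefore compute the candidate sets for round $i+1$ relative to the same compaction prefix, recomputing $AHV$ and $ANS$ from a common baseline. This restores precisely the symmetric preconditions assumed by Lemmas~\ref{thm.ahv}, \ref{thm.grade}, and \ref{thm.arrival} for ``the first time the criterion holds,'' now applied to the first criterion-satisfaction after the $i$-th output. Re-running the base-case argument on these reduced candidate sets yields $\mathcal{A}_p^{i+1} = \mathcal{A}_q^{i+1}$, completing the induction.

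I expect the main obstacle to be the inductive step rather than the base case: one must argue carefully that deleting the common preceding set from $\mathcal{C}_p$ and $\mathcal{C}_q$ and recomputing the potential functions genuinely re-establishes the hypotheses of the three lemmas --- in particular that the causality of the blocklattice is preserved under this ``reset'' so that $AHV$ and $ANS$ restart consistently across the two nodes. Once this reduction is justified, the remainder is a direct replay of the first-output argument.
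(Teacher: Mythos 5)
Your proof is correct and follows essentially the same route as the paper: induction on the output index, with the base case settled by combining Lemma~\ref{thm.arrival}, internal stability at the other node, and Lemma~\ref{thm.grade} to reach a contradiction with $b$ being preceding, and the inductive step handled by the same (equally brief) appeal to the consistency of the remaining sub-DAG after identical prefixes are removed. The only difference is cosmetic: you use $\mathcal{A}_q \subseteq \mathcal{C}_p$ from Lemma~\ref{thm.arrival} to place $b'$ in $\mathcal{C}_p$ directly, whereas the paper instead splits into normal- and early-delivery cases at $p$ and re-derives that containment via $|ANS_p|=n$ or a pigeonhole argument.
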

\begin{proof}
We use inductive assumption to prove this theorem.
For $i=1$ (the first time the criterion holds), for node $p,q$, we prove that $\mathcal{A}_p^i = \mathcal{A}_q^i$:
we prove this by contradiction. 
Assume $\mathcal{A}_p \neq \mathcal{A}_q$.
Without loss of generality, there exists a block $b \in \mathcal{A}_p \setminus \mathcal{A}_q$.
By Lemma  \ref{thm.arrival}, we have $\mathcal{A}_p \subset \mathcal{C}_q$ which implies $b \in \mathcal{C}_q \setminus \mathcal{A}_q$.
Let $b' \in \mathcal{A}_q \subset \mathcal{C}_q$ be the block s.t. $Grade_q(b',b)=1$.
This implies $ANS_q(b')>\emptyset$.
Consider two cases: either normal delivery or early delivery happened on node $p$.
Case 1: (normal delivery) $|ANS_p|=n$ means $b' \in \mathcal{C}_p$.
By Lemma  \ref{thm.grade}, $Grade_p(b',b) \neq 1$, contradiction because of $b \in \mathcal{A}_p$ in our assumption.
Case 2: (early delivery) We have $|ANS_p(b)|\geq n-\emptyset$ because $b \in \mathcal{A}_p$.
By the pigeonhole principle, there exists some node $r \in ANS_p(b) \cap ANS_q(b')$.
When the blocks issued by the node $r$ in both local views of nodes $p$ and $q$ are received,
it proves $b \in \mathcal{C}_q$, but $b \notin \mathcal{A}_q$, contradiction.
Thus, the $i=1$ case holds.
For $i+1$ case,
once the output set is the same, the sub-DAG is eventually consistent in every node's
local view. 
Therefore, it follows the statement of Lemmas \ref{thm.ahv}, \ref{thm.grade}, and \ref{thm.arrival} and the case of $i=1$.
This completes the proof.
\end{proof}

\begin{lemma} Let $p$ be a correct node and $\Delta_{BA}$ be the upperbound of processing time of BA, then
the following statements are correct:
\begin{enumerate}
\item If $b$ is received by $p$ at time $T$, then either $b$ has been output or $|ANS_p(b)| = |\mathcal{N}_p \setminus \mathcal{F}_p|$ holds at time $T+\Delta_{BA}+ \lambda$.
\item If $b \in \mathcal{P}$ and $|ANS_p(b)| \geq |\mathcal{N}_p \setminus \mathcal{F}_p|$ holds at time $T$, then the criterion holds during time $T$ and $T+\Delta_{BA}+ \lambda$.
\end{enumerate}
\label{thm.livenesslemma}
\end{lemma}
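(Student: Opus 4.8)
The plan is to treat the two statements separately, but both rest on the same two facts: the Byzantine agreement of Section \ref{subsection:ba} produces a new block on every correct chain within $\Delta_{BA}$, and under weak synchrony every gossiped block reaches $p$ within one period $\lambda$. The first statement is a pure \emph{ack-accumulation} argument, while the second converts a large acking set into a firing of the output criterion; I expect the second to carry the real difficulty.

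For statement (1), I would argue as follows. Fix a correct chain $j \in \mathcal{N}_p \setminus \mathcal{F}_p$. Once $b$ is present in the system, $j$ has received $b$ within one gossip period, and the next block $j$ issues — which liveness of the Byzantine agreement guarantees within $\Delta_{BA}$ — directly acks every block $j$ has already seen, in particular $b$. By the gossip bound this new block of $j$ reaches $p$ within a further $\lambda$, so by time $T + \Delta_{BA} + \lambda$ either $b$ has already been output or $j \in ANS_p(b)$. Ranging over all correct chains gives $ANS_p(b) \supseteq \mathcal{N}_p \setminus \mathcal{F}_p$, i.e.\ every correct chain acks $b$. The one point that needs care here is the timing bookkeeping: a correct chain may receive $b$ slightly later than $p$, so one must check that ``received by $p$ at $T$'' together with a single gossip period still fits the additive budget $\Delta_{BA} + \lambda$, or else absorb the extra $\lambda$ into how $\Delta_{BA}$ is defined.

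For statement (2), I would start from $|ANS_p(b)| \geq |\mathcal{N}_p \setminus \mathcal{F}_p|$ and note $|\mathcal{N}_p \setminus \mathcal{F}_p| \geq |\mathcal{N}| - f_{max} \geq 2 f_{max} + 1 = \Phi$, so $b$ is acked by more than half of the chains. I would split on whether the slow (Byzantine) chains ever contribute: if $|ANS_p|$ reaches $n$ within the window then normal delivery fires and we are done. Otherwise I aim for early delivery, and here I would invoke Theorem \ref{thm.sim2}, which reduces early delivery to internal stability together with $\mathcal{C}_p \setminus \mathcal{A}_p \neq \emptyset$; this lets me avoid establishing external stability (b) directly. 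The substantive step is then internal stability: that by time $T + \Delta_{BA} + \lambda$ every candidate not in $\mathcal{A}_p$ is dominated by some preceding block with $Grade_p = 1$. The engine for this is exactly the heavily-acked $b$: once $b$ enters $\mathcal{C}_p$ its vector $AHV_p(b)$ has at least $\Phi$ finite (integer) entries, which are small relative to the still-arriving entries of any non-preceding $b'$, so the count $|\{j : AHV_{p+}(b)[j] < AHV_{p+}(b')[j]\}|$ can be pushed past $\Phi$ and force $Grade_p(b,b') = 1$.

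The main obstacle I anticipate is the boundary case of the threshold. Because $|\mathcal{N}_p \setminus \mathcal{F}_p|$ can equal $\Phi$ exactly (when $|\mathcal{N}| \equiv 1 \pmod 3$), a block acked by all correct chains yields $\#AHV_p(b) = \Phi$, whereas external stability (b) and the $Grade = 1$ test both compare a count against $\Phi$ with a \emph{strict} inequality. Closing this one-off gap — either by showing that a further ack (for instance $p$'s own contribution, or a Byzantine chain that happens to ack) reliably materializes within the window, or by re-reading the counts so that the strict bound is genuinely met — is where I would spend most of the effort; everything else reduces to the routine arithmetic of combining $\Delta_{BA}$ with a single gossip period.
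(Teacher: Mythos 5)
Your argument for statement (1) is essentially the paper's: every correct chain receives $b$ within a gossip period, its BA-guaranteed next block acks $b$ (directly or through a descendant), and that block reaches $p$ within $\lambda$; the paper compresses exactly this into the assumption that each correct node receives and acks a block within $\Delta_{BA}+\lambda$ of its issuance. Your timing caveat is a fair criticism of the paper's bookkeeping, but the approach is the same.

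For statement (2), however, you have taken a different and ultimately incomplete route. The paper does not attempt early delivery at all: it observes that \emph{every} chain, including those in $\mathcal{F}_p$, must issue a block during $[T, T+\Delta_{BA}+\lambda]$ --- in DEXON a ``chain'' is not driven by a single node but by a notary set running the Byzantine agreement of Section \ref{subsection:ba}, whose liveness holds with up to $\tmax$ Byzantine members, so no chain can stall --- hence $|ANS_p|$ reaches $|\mathcal{N}_p| = n$ within the window and the criterion fires as \emph{normal} delivery via Theorem \ref{thm.sim1}. The case you spend most of your effort on (``the slow chains never contribute'') is precisely the case the model rules out, and your treatment of it has two genuine holes that you yourself flag but do not close: (i) the off-by-one at the threshold, since $|\mathcal{N}_p\setminus\mathcal{F}_p|$ can equal $\Phi$ exactly while external stability (b) and $Precede_{p+}=1$ both demand a count \emph{strictly} greater than $\Phi$; and (ii) internal stability is not established --- showing that $AHV_p(b)$ has $\Phi$ finite entries does not give $Grade_p(b,b')=1$ against an arbitrary candidate $b'\in\mathcal{C}_p\setminus\mathcal{A}_p$, because the $Grade=1$ condition needs more than $\Phi$ coordinates where $AHV_{p+}(b)[j]<AHV_{p+}(b')[j]$, and nothing forces $b'$'s entries at those coordinates to stay large. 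Your first branch (``if $|ANS_p|$ reaches $n$ then normal delivery fires'') is in fact the entire proof; you should justify why that branch always occurs rather than falling back to early delivery.
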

\begin{proof}
For the first statement, suppose $b$ is not output and $|ANS_p(b)| \neq |\mathcal{N}_p \setminus \mathcal{F}_p|$.
We assume that each correct node receives a block within $\Delta_{BA}+ \lambda$ after it is issued and all correct nodes will ack it or ack the following blocks issued by that node.
Thus, $|ANS_p(b)| \geq |\mathcal{N}_p \setminus \mathcal{F}_p|$.
For the second statement, we have $|ANS_p| \geq |ANS_p(b)| \geq |\mathcal{N}_p \setminus \mathcal{F}_p|$ and any Byzantine node must issue a block during time $T$ and $T+\Delta_{BA}+ \lambda$.
Thus, $|ANS_p|=|\mathcal{N}_p|$ holds, and the criterion must hold by Theorem \ref{thm.sim1}.
\end{proof}

\begin{lemma} If $\mathcal{C}_p \neq \emptyset$ and the criterion holds, then $\mathcal{A}_p \neq \emptyset$ for any node $p$.
\label{thm.outputelm}
\end{lemma}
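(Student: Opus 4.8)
The plan is to split on the simplified output criterion established in Theorems~\ref{thm.sim1} and~\ref{thm.sim2}, so that ``the criterion holds'' means either \emph{normal delivery} ($|ANS_p| = n$) or \emph{early delivery}. The early-delivery case is immediate: its original External Stability (b) begins ``$\exists b \in \mathcal{A}_p$'', and equivalently the simplified early-delivery condition combines internal stability with $\mathcal{C}_p \setminus \mathcal{A}_p \neq \emptyset$, so picking any $b \in \mathcal{C}_p \setminus \mathcal{A}_p$ and applying internal stability yields a witness $b' \in \mathcal{A}_p$ with $Grade_p(b',b)=1$. Hence $\mathcal{A}_p \neq \emptyset$ whenever early delivery fires, and all the real work lies in the normal-delivery case.

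For normal delivery I would first record a structural fact about candidate blocks. By the definition of $AHV_p$, for each node $q \in ANS_p$ the entry $AHV_p(b)[q]$ equals a value $k_q := \min\{i : b_{q,i}\in\mathcal{P}\}$ that does \emph{not} depend on $b$ whenever $q$'s earliest pending block acks $b$, and equals $\infty$ otherwise. Thus every coordinate is binary, and I can encode $b$ by its \emph{support set} $S(b) := \{q \in ANS_p : AHV_p(b)[q] = k_q\}$ (taken in the future view $p+$ used by $Grade_p$). Under this encoding $|\{j : AHV_{p+}(b_1)[j] < AHV_{p+}(b_2)[j]\}|$ is exactly $|S(b_1)\setminus S(b_2)|$, since a strict inequality at $j$ occurs iff $j \in S(b_1)$ and $j \notin S(b_2)$. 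Because $|ANS_p| = n$, the threshold $\Phi-(n-|ANS_p|)$ in $Grade_p$ collapses to $\Phi$, so ``$b$ is not preceding'' means there is a $b' \in \mathcal{C}_p$ with $Grade_p(b',b)\neq 0$, i.e.\ with $|S(b')\setminus S(b)| \geq \Phi$.

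The key step is then a direct extremal argument. Let $b^\ast \in \mathcal{C}_p$ be a candidate with $|S(b^\ast)|$ maximal, which exists since $\mathcal{C}_p \neq \emptyset$. I claim $b^\ast$ is preceding. Suppose not; then some $b'$ satisfies $|S(b')\setminus S(b^\ast)| \geq \Phi$. By maximality $|S(b^\ast)| \geq |S(b')| \geq |S(b')\setminus S(b^\ast)| \geq \Phi$. But $S(b')\setminus S(b^\ast)$ is disjoint from $S(b^\ast)$ and both sit inside the $n$ coordinates, so $|S(b^\ast)| + \Phi \leq |S(b^\ast)| + |S(b')\setminus S(b^\ast)| \leq n$, giving $|S(b^\ast)| \leq n-\Phi$. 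Combining the two bounds yields $\Phi \leq n-\Phi$, i.e.\ $2\Phi \leq n$, contradicting $\Phi > |\mathcal{N}|/2 = n/2$. Hence $b^\ast$ is preceding and $\mathcal{A}_p \neq \emptyset$.

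The main obstacle I anticipate is justifying the structural reduction cleanly: one must check that the binary ``$k_q$ or $\infty$'' description of $AHV_p$ genuinely holds for candidate blocks in the future view $p+$, and must handle the boundary between $Grade_p=\bot$ (count exactly $\Phi$) and $Grade_p=1$ (count $>\Phi$). Fortunately the extremal argument needs only the combined bound $\geq \Phi$, so the $\bot$ case is absorbed automatically. Once the support-set encoding is in place, the disjointness-plus-majority counting—two subsets of the $n$ coordinates of size exceeding $n/2$ cannot be disjoint—closes the argument; the same counting would alternatively forbid any $Precede_p$-cycle, if one preferred to mimic the finite-chain reasoning used in the proof of Theorem~\ref{thm.sim1}.
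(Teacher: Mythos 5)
Your proof is correct, and in the normal-delivery branch it takes a genuinely different route from the paper. The paper's proof is a one-liner: since the (simplified) criterion guarantees internal stability in every case --- via Theorem~\ref{thm.sim1} for normal delivery and by definition for early delivery --- any element of $\mathcal{C}_p\setminus\mathcal{A}_p$ yields a witness $b'\in\mathcal{A}_p$, and if $\mathcal{C}_p=\mathcal{A}_p\neq\emptyset$ there is nothing to prove. Your early-delivery case coincides with this. For normal delivery, however, you bypass the appeal to internal stability and instead argue nonemptiness of $\mathcal{A}_p$ directly: you observe that each coordinate of $AHV_p(b)$ takes only the block-independent value $k_q$ or $\infty$, encode candidates by support sets, and run an extremal argument on a maximal support set, closing with the counting fact that two disjoint subsets of $n$ coordinates cannot both have size at least $\Phi$ when $\Phi>n/2$. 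This is more work than the paper does, but it buys something real: the paper's own nonemptiness step (the first paragraph of the proof of Theorem~\ref{thm.sim1}) asserts ``by the definition of $Precede_p$, $b'\in\mathcal{A}_p$,'' which does not follow immediately from the definitions and at best sketches a finite-descent argument; your maximal-support choice is precisely the rigorous repair of that step. You are also right to flag the two delicate points --- the $Grade_p=\bot$ boundary, absorbed because you only need the weak bound $\geq\Phi$, and the stability of the binary encoding under the future view $p+$, which holds because $k_q$ is common to all candidates. Your argument is sound and, if anything, strengthens the paper's chain of reasoning rather than merely reproducing it.
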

\begin{proof}
Because $\mathcal{C}_p \neq \emptyset$ and the criterion holds, internal stability implies there exists at least one element in $\mathcal{A}_p$ that precedes the element in $\mathcal{C}_p$.
Thus, $\mathcal{A}_p \neq \emptyset$ holds.
\end{proof}

\begin{theorem} [Liveness for total ordering] 
Let $p$ be a correct node and $\Delta_{BA}+ \lambda$ be the upperbound of processing time of BA. For any time interval $\Delta_o \geq 2\Delta_{BA}+ \lambda$ the criterion holds and the set of output is non-empty in the interval $\Delta_o$.
\label{thm.liveness}
\end{theorem}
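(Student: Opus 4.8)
The plan is to establish liveness by chaining the two parts of Lemma \ref{thm.livenesslemma} to bound the time until the output criterion fires, and then invoking Lemma \ref{thm.outputelm} to convert a satisfied criterion into a non-empty output. First I would fix an arbitrary interval $[T_0, T_0 + \Delta_o]$ with $\Delta_o \geq 2\Delta_{BA} + \lambda$ and pick a block $b$ that is (or becomes) a candidate in $p$'s view at the start of the interval. Liveness of the underlying single-chain protocol guarantees that fresh blocks keep being issued, so $\mathcal{C}_p$ cannot remain empty and such a $b$ exists; this rules out the degenerate empty-candidate situation that would otherwise make the output set trivially empty.

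Next I would apply Lemma \ref{thm.livenesslemma}(1) to $b$: within $\Delta_{BA} + \lambda$ of $b$ being received, either $b$ has already been output, in which case a non-empty output has occurred inside the interval and we are done, or $|ANS_p(b)| = |\mathcal{N}_p \setminus \mathcal{F}_p|$ holds. In the latter case I would feed this fact into Lemma \ref{thm.livenesslemma}(2), which forces the criterion to hold within a further $\Delta_{BA}$. Hence the criterion is reached at some time no later than $T_0 + 2\Delta_{BA} + \lambda$, and in particular at some instant lying inside the interval of length $\Delta_o$.

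Once the criterion holds while $\mathcal{C}_p \neq \emptyset$, Lemma \ref{thm.outputelm} immediately gives $\mathcal{A}_p \neq \emptyset$, so the algorithm delivers a non-empty preceding set within the interval; combined with Theorem \ref{thm.output} this output is consistent across all correct nodes, so the compaction chain keeps growing and every block is eventually finalized, which is exactly the liveness specification.

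The main obstacle I anticipate is the timing bookkeeping needed to collapse the two sequential phases into the stated bound $2\Delta_{BA} + \lambda$ rather than a naive $2\Delta_{BA} + 2\lambda$. The key observation is that the $\lambda$ term represents the single round of gossip propagation needed to spread the acks on $b$, and the acks relied upon by Lemma \ref{thm.livenesslemma}(2) are precisely those whose propagation has already been charged in part (1); thus the second application inherits already-propagated information and incurs only the additional $\Delta_{BA}$ processing delay, not a fresh $\lambda$. Making this sharing of the propagation delay explicit is the delicate point of the argument, whereas the remaining steps are direct invocations of the preceding lemmas.
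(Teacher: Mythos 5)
Your proof follows essentially the same route as the paper's: chain the two parts of Lemma \ref{thm.livenesslemma} to establish that the criterion holds within $2\Delta_{BA}+\lambda$ of a block's arrival, invoke Lemma \ref{thm.outputelm} to conclude the preceding set is non-empty, and appeal to single-chain liveness to guarantee candidates keep arriving. You are in fact more careful than the paper, which simply asserts the $2\Delta_{BA}+\lambda$ bound even though a naive composition of the two lemma parts gives $2\Delta_{BA}+2\lambda$; your explicit argument that the $\lambda$ propagation delay is shared between the two phases addresses a point the paper glosses over.
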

\begin{proof}
We first prove that if a block $b$ is received by $p$ at time $T$, the criterion holds and the set of output is non-empty  before $T+ 2\Delta_{BA}+ \lambda$:
by Lemma \ref{thm.livenesslemma}, the criterion will hold before $T+ 2\Delta_{BA}+ \lambda$.
By Lemma \ref{thm.outputelm}, the output set is not empty.
Because each chain has liveness, the theorem is proved.

\end{proof}

\begin{theorem} [Validity of liveness for total ordering] 
For each input of a valid block $b$, the total ordering algorithm will output $b$ eventually. 
\label{thm.validity}
\end{theorem}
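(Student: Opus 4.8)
The goal is to show Theorem \ref{thm.validity}: every valid block $b$ that is input to the algorithm is eventually output. The plan is to combine the liveness machinery of Theorem \ref{thm.liveness} with a monotonicity argument on the potential functions $AHV_p$ and $ANS_p$, so that any fixed candidate block cannot be ``starved'' indefinitely by the arrival of later blocks.

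First I would reduce the claim to candidate blocks. A valid block $b$ issued by some node, once all the blocks it (directly or indirectly) acks have been output, becomes a candidate, i.e.\ $b \in \mathcal{C}_p$; the algorithm's second \texttt{When} branch moves such pending blocks into the candidate set after each delivery. So it suffices to show: once $b \in \mathcal{C}_p$, it is output within finitely many rounds. The subtlety is that $b$ might sit in $\mathcal{C}_p \setminus \mathcal{A}_p$ across several output rounds, so I must argue that $b$ eventually enters a preceding set $\mathcal{A}_p$.

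The core step is a well-founded descent. Define for each candidate $b$ the (finite) set of candidates that currently precede it, i.e.\ those $b'$ with $Grade_p(b',b)=1$. By Lemma \ref{thm.outputelm} each invocation of the criterion outputs a nonempty $\mathcal{A}_p$, and by internal stability every block in $\mathcal{C}_p\setminus\mathcal{A}_p$ is preceded by some block in $\mathcal{A}_p$; hence at least one block that precedes $b$ is removed from the candidate set at each output round in which $b$ is not yet delivered. Here I would use the monotonicity built into the $Grade$/$Precede$ definitions: as $ANS_p(b)$ grows toward the full set (guaranteed by Lemma \ref{thm.livenesslemma}, statement 1), the relation $Grade_p(b',b)=1$ can only be established, never revoked, for a fixed pair, because $AHV_p$ entries are consistent and converge (Lemma \ref{thm.ahv}). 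Thus no new predecessor of $b$ can appear once $b$'s ack-height vector stabilizes, while the existing predecessors are drained one output round at a time. Since $|\mathcal{N}|$ is finite, the number of potential predecessors of $b$ is bounded, so after finitely many output rounds $b$ has no remaining predecessor and must itself lie in $\mathcal{A}_p$, whence it is output.

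To make the round count concrete I would invoke Theorem \ref{thm.liveness}: the criterion fires and produces a nonempty output set at least once in every interval of length $2\Delta_{BA}+\lambda$, so the finitely many output rounds needed to exhaust $b$'s predecessors occur within a bounded amount of real time, giving eventual (indeed timely) delivery of $b$. The main obstacle I anticipate is the monotonicity claim — ruling out the possibility that later-arriving blocks create a \emph{new} block that precedes $b$ and thereby keeps $b$ out of $\mathcal{A}_p$ forever. Handling this cleanly requires showing that once all nodes in $\mathcal{N}$ have acked $b$ (so $ANS_p(b)$ is maximal and $AHV_p(b)$ has no more $\bot$ entries), the comparison $Precede_p(\cdot,b)$ is frozen, and that the finitely many blocks that can ever precede $b$ are each removed on distinct output rounds; I would lean on Lemmas \ref{thm.ahv} and \ref{thm.livenesslemma} to pin down this stabilization.
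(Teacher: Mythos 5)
Your route is genuinely different from the paper's. The paper argues by contradiction: assuming $b$ is never output, it observes that no block acking $b$ can ever become a candidate, so everything the algorithm can still output lies in $\Gamma'$ (blocks from nodes that never ack $b$) or in the \emph{finite} set $\Gamma_s$, and then uses bounds on $\#AHV$ to force $b$ (or the exhaustion of $\Gamma_s$) into $\mathcal{A}_p$. You instead run a direct well-founded descent on the set of predecessors of $b$, using internal stability plus Lemma \ref{thm.outputelm} to drain at least one predecessor per output round and Theorem \ref{thm.liveness} to guarantee output rounds keep occurring. Structurally your argument is cleaner, but as written it has two holes.

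First, the reduction ``it suffices to show that once $b\in\mathcal{C}_p$ it is output'' is circular: $b$ enters $\mathcal{C}_p$ only after every block it directly or indirectly acks has been output, and that is exactly Theorem \ref{thm.validity} applied to those blocks. You must frame the whole proof as a well-founded induction on the ack-DAG (acks only point to earlier blocks), with the candidate case as the inductive step; the paper's contradiction argument sidesteps this. Second, and more seriously, the monotonicity claim you flag as ``the main obstacle'' is the actual mathematical content of the proof, and neither Lemma \ref{thm.ahv} nor Lemma \ref{thm.livenesslemma} delivers it --- they concern consistency across nodes, not stability across time at a single node. You need a separate observation, which does hold: for any two candidates $b',b$ and any $j\in ANS_p$, the entries $AHV_p(b')[j]$ and $AHV_p(b)[j]$ are each either $\infty$ or equal to the \emph{same} integer $k=\min\{i: b_{j,i}\in\mathcal{P}\}$, so $AHV_p(b')[j]<AHV_p(b)[j]$ forces $AHV_p(b)[j]=\infty$; hence $Precede_p(b',b)=1$ requires more than $\Phi$ coordinates at which the lowest pending block of $j$ does not yet ack $b$, which becomes impossible once $\#AHV_p(b)\geq n-\Phi$. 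With that lemma the predecessor set of $b$ is eventually frozen and finite, and your descent closes. Without it, you have not ruled out that each output round introduces a fresh predecessor of $b$ and starves it forever, so you should state and prove it explicitly rather than defer it.
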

\begin{proof}
Assume there exists a valid block $b$ in $p$'s local view s.t. it  is not output by the DEXON total ordering algorithm.
Let the set of blocks that acks $b$ be the set $\Gamma$ and let the set $\Lambda$ be the set of
nodes that issued blocks in $\Gamma$.
Because $b$ is not output, no block in $\Gamma$ will be a candidate block at any time.
Thus, only the elements in set $\Gamma'$ or $\Gamma_s$ could be output, where
$\Gamma'$ is the set of blocks issued by the node in $\mathcal{N} \setminus \Lambda$ and $\Gamma_s$ is the set of blocks whose heights are lower than those of the blocks in $\Gamma$ for each node in $\Lambda$. That is, $\Gamma_s = \Set{b_{q,j} : j<j', \text{where } b_{q,j'}\in \Gamma \text{ and } q \in \Lambda}$.
If the algorithm produces output many times,
we can discover a set of blocks $B \in \Gamma'$ that are acked by some blocks whose heights are higher than those of the blocks in $\Gamma$ for some node in $\Lambda$.
Thus, we have $\#AHV_p(b')<f_{max}$, for all $b' \in B$.
Now, consider two cases: the first case is $\#AHV(b)>\Phi$, then $b \in \mathcal{A}$.
The second case is $\#AHV_p(b)\leq \Phi$, but $|ANS_p| > 2f_{max}$ and $\#AHV_p(b')<f_{max}< \Phi$.
Thus, either $b$ or some blocks in $\Gamma_s$ will be in $\mathcal{A}$.
Because $\Gamma_s$ is a finite set, the algorithm will eventually output block $b$.
\end{proof}

\subsubsection{Complexity}
\label{totocomplexity}
In this section, we analyze the complexity of our total ordering algorithm.
First, we describe the data structure we used:
we store both $AHV'$ and $ANS$ vectors for each block in the pending set,
and $AHV'_p(b)$ is defined as 
\begin{center}
$\begin{cases}
    \bot, & \text{if $q \notin  ANS_p$},\\
    k,    & \text{where } k \text{ is the minimum number s.t. } b_{q,k} \text{ acks } b,\\
    \infty, & \text{otherwise}.
  \end{cases}$\\
\end{center}
We also store a global vector $GAHV_p$, which is the minimum height of blocks issued by each node in $p$'s local view.
Thus, the $AHV_p(b)[j]$ can be computed from $GAHV_p[j]$ and $AHV'_p(b)[j]$ in $\mathcal{O}(1)$.

Second, we start to analyze the algorithm:
when a block is received, if it is a candidate block, the algorithm updates the $AHV$ vectors for the all blocks in the candidate set. This costs $\mathcal{O}(n)$ time. Because no block acks it, only the criterion $|ANS_p|=n$ must be determined.
Otherwise, some block indirectly or directly acks candidate blocks in the $DAG_p$; thus, both $AHV'$ and $ANS$ of all the blocks acked by the received block must be updated.
This costs $\mathcal{O}(n^2)$ time because the total number of blocks in $DAG_p$ is $\mathcal{O}(n^2)$ and the computation of updating for each block is $\mathcal{O}(1)$ .

Third, to maintain the preceding set, a direct method to implement the algorithm is to compute all candidate blocks every time; this method is exactly the same as the pseudo code, but it costs  $\mathcal{O}(n^3)$ because  $|\mathcal{A}|\times|\mathcal{C}\setminus\mathcal{A}|$ blocks  exist and for each iteration, the method requires  $\mathcal{O}(n)$ to compute $|\{j|AHV_p(b)[j]<AHV_p(b')[j]\} |$.
We use a lazy-computation to reduce the complexity because we can store the current value and update it lazily.
Specifically, we use a matrix of $(|\mathcal{A}|+|\mathcal{C}\setminus\mathcal{A}|)$ rows and  $|\mathcal{C}\setminus\mathcal{A}|$ columns, consisting of elements of the form $(i_1,i_2)$, each of which is
$|\{j|AHV_p(b_{i_1})[j]<AHV_p(b_{i_2})[j]\} |$ for $b_{i_1}\in \mathcal{A}$ if $i_1 < |\mathcal{A}|$, $b_{i_1}\in \mathcal{C}\setminus\mathcal{A}$ if $i_1 \geq |\mathcal{A}|$, and 
$b_{i_2}\in \mathcal{C}\setminus\mathcal{A}$.
Thus, every time a block is received, the matrix updates the current relations of $AHV$ functions between candidate blocks. The preceding set and $criterion_p$ can both be determined by the matrix within $\mathcal{O}(n^2)$ time. 
Several operations can be conducted on the matrix.
One operation is updating the matrix when a block is received,
all the elements in the matrix are updated and each update costs a constant operation time.
If the elements in a row all have value less than $\Phi-n+|ABS_p|$, this means that the block corresponding to the row should be  in  the preceding set. 
Thus, the algorithm adds the block into the  preceding set and updates the matrix.
If there exists an elements in a row whose value is larger than $\Phi$, it means that there exists an element in the preceding set that precedes the block corresponding to the row.
Therefore, the complexity of all aforementioned cases is bound by $\mathcal{O}(n^2)$.

Therefore, both time and space complexity of the total ordering algorithm is bound by $\mathcal{O}(n^2)$. 
Note that, no communication exists between any two nodes in  this total ordering algorithm.

\subsection{Timestamping Algorithm}
\label{subsec:timestamping}
In this section, we present the timestamp algorithm, which ensures that
the timestamp is decided from consensus so that the Byzantine nodes cannot bias it.
To compute the consensus timestamp for block $b$, the timestamping algorithm first constructs a vector, whose elements are the newest time of blocks from each chain before block $b$. Then, the median of the vector is the consensus timestamping of block $b$.
The algorithm is shown in Algorithm \ref{algorithm:timestamp}, where $b.Chain\_ID$ means $b$ is generated from the $chain\_ID$ and $b.block\_timestamp$ is the block timestamp of $b$. 
\begin{algorithm}[htbp]
\caption{Compute Timestamp Algorithm}
\label{algorithm:timestamp}
\Fn{Compute\_Timestamp}{
    \Input{ordered chain $Chain_{order} = <b_0, b_1, ... >$ }
    \Output{CompactionChain $\{b_i\}$ with consensus timestamp for each block}
    Let $n$ be the number of chains and $V$ is $n$-dimensional vector\\
    \ForEach{$ID$ of the chain}{
       $V[ID]$ = time of genesis block.
    }
    $i=0$\\
\While{$i \leq$ height of ordered chain}{
    $V[Chain_{ordered}[i].Chain\_ID] = Chain_{ordered}[i].block\_timestamp$\\
    $Chain_{ordered}[i].consensus\_timestamp = median(V)$\\
    $i++$
}
}
\end{algorithm}

\subsection{Consensus on the Blocklattice}
\label{subsec:blocklattice}

\paragraph{Overview}
The DEXON consensus is the consensus of a blocklattice that consists of acking information and numerous single chains as presented in the design in Section \ref{subsection:merge}. With the total ordering algorithm, the blocklattice is processed to a compaction chain by each node individually. The timestamping algorithm computes the consensus timestamp for each block in the compaction chain. 
The parameter in the system contains the number of chains, $\Phi$, threshold $\kappa$ for total ordering, and the network latency bound $\lambda$. We first explain how to add acking information into block information, and then present the mechanism of the consensus of blocklattice including adapting the total ordering and notarizing the blocks so that user can verify the validity of the blocks in the  compaction chain. In the end of this section, we introduce the load balancer that allows each chain to gather the transactions that are not contained in the blocks of other chains. 

\paragraph{Block Information}
To form a blocklattice structure, each block must add additional \textit{acking} information.
When a user proposes a new block, the proposed block \textit{acks} the latest notarized blocks of other chains it has received. This means it sees the blocks from other nodes and supports them.
We say that a block acks another block if its ack field contains the acked block's information.
An \textit{ack field} has the following data structure:

\begin{table}[h!]
\centering
\begin{tabular}{ll}
\toprule
\texttt{block\_proposer\_id} & block proposer's ID \\\addlinespace 
\texttt{acked\_block\_hash} & hash of the acked block \\\addlinespace
\texttt{block\_height} & height of the block, starting at 0 \\
\bottomrule
\end{tabular}
\end{table}

The DEXON blocklattice is formed by numerous single chains with acks between blocks. 
An example of a DEXON blocklattice data structure is illustrated in Figure \ref{figure:blocklattice}.

\begin{figure}
\centering
  \includegraphics[width=0.3\linewidth]{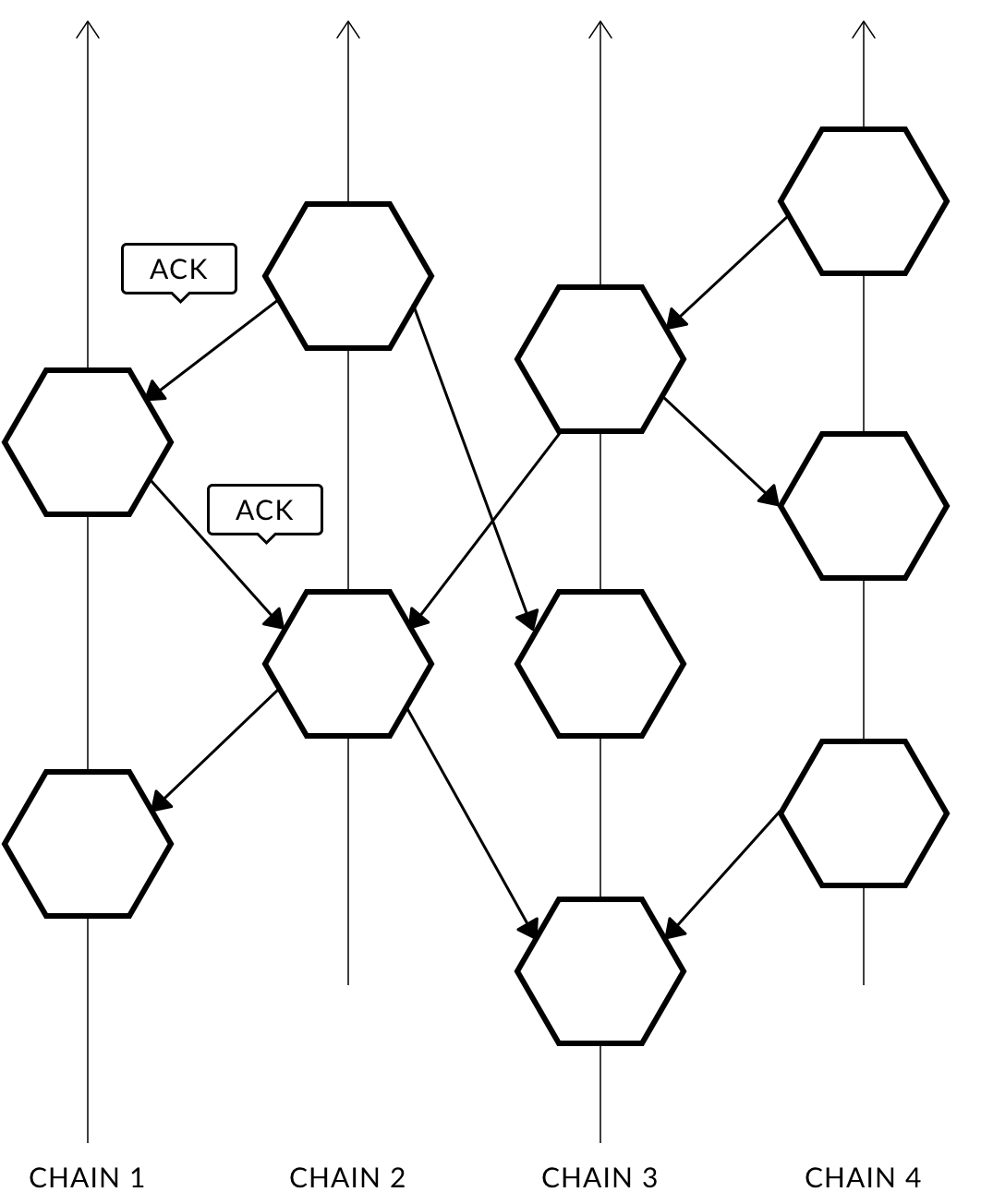}
  \caption{Blocklattice}
  \label{figure:blocklattice}
\end{figure}

\paragraph{Single Chain}
The protocol is described as Section \ref{section:singleChain}. Note that these single chains can share the same CRS, so the system is only required to have one CRS set $\Scrs$.

\paragraph{Compacting into Total-Ordered Chain}
Each single chain generates its blocks individually and these generated blocks form a blocklattice. Then, each node executes the total ordering algorithm with the blocklattice as input, and the output of the total ordering algorithm is the compaction chain, which is compacted from the blocklattice. Then, the system applies the timestamping algorithm to compute the consensus time for each block in the compaction chain.

\paragraph{Notarizing the Blocks in the Compaction Chain}
Once a block is in the compaction chain, the height of the block in the compaction chain, the consensus timestamp, and the threshold signature are confirmed. These three items of information are recorded in one of the next blocks of the compaction chain.
Thus, the validity of the block in the compaction chain can be verified by the following blocks which contains the notarizing information.

\paragraph{Load Balancer}
In DEXON, many single chains grow concurrently. This leads to a problem: some transactions (especially those with high transaction fees) may be packed into several single chains simultaneously. Note that this problem does not do harm to the correctness of DEXON. Suppose a transaction {\sf tx} is notarized in two different single chains. Because all nodes can agree on the same compaction chain after the total ordering algorithm, the transaction in the block with a smaller block height in the compaction chain will be executed and the other will not be executed due to contradiction. However, this problem wastes the space utility of blocks. 

To solve this problem, we introduce a constraint of packing transactions, which is called \emph{load balancer}. Suppose the number of single chains is $N$. Let $S_{\text{notary},j}$ be the notary set of the $j$-th single chain. The node can pack a transaction {\sf tx} into its proposed block only if \[
Hash(\mbox{{\sf tx}}) \mbox{ mod } N = j.
\]
With load balancer, each transaction is only allowed to be packed into one single chain and the problem is resolved.

\subsection{Resistance Against Byzantine Nodes}
We briefly discuss how robust DEXON is against different ratio of Byzantine nodes. As a general framework for combining many single chains, DEXON can resist against $\lfloor (|\Snode|-1)/3 \rfloor$ Byzantine nodes as long as the correctness and the liveness of all individual single chain holds. That is, given $N$ single chains with agreement, DEXON can integrate these single chains into a globally-ordered chain with unbiased timestamp.

As for the single chain algorithm, in Section \ref{subsection:ba}, we show that the agreement and the termination of our Byzantine agreement protocol holds if $t \leq \lfloor (|S_\text{notary}|-1)/3 \rfloor$, where $t$ is the number of Byzantine nodes in $S_\text{notary}$. 

Thus, if we adopt our Byzantine agreement protocol as DEXON's underlining single chain algorithm, we need to guarantee that the number of Byzantine nodes in each notary set is smaller than $\lfloor (|S_\text{notary}|-1)/3 \rfloor$. Let $S_{\text{notary},i}$ be the notary set of $i$-th single chain and $t_i$ is the number of Byzantine nodes in $S_{\text{notary},i}$. Because the notary set of each single chain is chosen independently, the probability that $t_i \leq \lfloor (|S_{\text{notary},i}|-1)/3 \rfloor$ for all $i$ will follow hypergeometric distribution. Consequently, there is a trade-off between the size of notary set, the ratio of Byzantine nodes and the probability of failure. The following table illustrates that the lowest size of $S_\text{notary}$ with different ratio of Byzantine nodes and different probability of failure  given $|\Snode| = $ 10k and $|\Snode| = $ 100k.
 
\begin{table}[h!]
\centering
\begin{tabular}{ccccc}
\multicolumn{1}{c|}{}  & \multicolumn{4}{c}{ratio of Byzantine nodes in $\Snode$}                                      \\ 
\multicolumn{1}{c|}{failure }              & $1/4$ & $1/5$ & $1/4$                 & $1/5$         \\ 
\multicolumn{1}{c|}{probability} & \multicolumn{2}{c}{$\Snode = 10$k}                    & \multicolumn{2}{c}{$\Snode = 100$k} \\ \hline
\multicolumn{1}{c|}{$2^{-40}$}   & ~~~~~1237~~~~~             & 481                        & ~~~~~1402~~~~~        & 489           \\ \hline
\multicolumn{1}{c|}{$2^{-60}$}   & 1789                       & 724                        & 2165                  & 774           \\ \hline
\multicolumn{1}{c|}{$2^{-80}$}   & 2272                       & 952                        & 2900                  & 1054         
\end{tabular}
\caption{The lowest sizes of $S_\text{notary}$ such that less than 1/3 members in $S_\text{notary}$ are Byzantine nodes for $|\Snode|=$ 10k and 100k under the ratio of Byzantine nodes $1/4$ and $1/5$ with different probability are illustrated. For example, if $|\Snode|=$ 10k and the number of Byzantine nodes is $10000\times 1/4 = 2500$, the lowest size of $S_\text{notary}$ is 1237 such that less than 1/3 members in $S_\text{notary}$ are Byzantine nodes with probability of at least $1-2^{-40}$.}
\end{table}

\section{Extension of DEXON}
\label{section:extension}
\subsection{Total Ordering as Sharding Framework}
As mentioned before, compacting a blocklattice into a compaction chain by the total ordering algorithm is a generic method for any blockchain with single-chain structure. A real-time total ordering can support up to 90 nodes in our experiment on a laptop (this experiment does not involve parallelization). A further way to create more sharing chains is to shard many total orderings, and then, to sort the many compacted outputs by each total ordering algorithm according to the unbiased timestamp.

\subsection{Configuration Change}
\label{subsection:configuration}

One of the main features of our consensus is the support of configuration changes while the consensus is operating.
For example, the parameters of the algorithm or the number of chains may be adjusted in order to reach higher efficiency or higher throughput. 
To change configuration, all nodes must initiate at the same state; otherwise, the system will be inconsistent. 
Fortunately, the output of total ordering is a sequence of sets and
each node outputs the same set through either normal delivery or early delivery at the same height of the compaction chain.
Moreover, each block in the compaction chain is marked by a consensus timestamp, which is close to real-world time.
Thus, a way to change configuration is to change the system at the same height or at a specific consensus time (e.g. the first block after the specific consensus time).

\paragraph{Adding/Deleting Chain}
As previously mentioned, the consensus algorithm can change configuration by the state of total ordering. In this section, we demonstrate how to achieve configuration changes by the consensus in each single chain.

Given a specific time $T$, the block $b_i$ after $T$ is the first block after a configuration change in chain $i$.
Let $prev(b)$ be the parent block of $b$; let the blocks before the configuration change be in round $r_1$; and let the blocks after the configuration change be in round $r_2$.
That is, all blocks before $prev(b_i)$ are in $r_1$ and all blocks after $b_i$ are in $r_2$.
Now, we add a rule for acking: any block in round $r_2$ can only ack the last block in round $r_1$ or blocks in round $r_2$.

We use total ordering algorithm to compact the blocks in $r_1$ until the system has received all $\{b_i\}_{i \in \{1,...,n\}}$. 
If some blocks in $r_1$ are not output by total ordering, we use a deterministic topological sort to output them (e.g. output the candidate set with lexicographical order each time). 
Then, we start a new total ordering algorithm with input blocks of round $r_2$.
This is correct because of the liveness of each single chain.
However, the confirmation time may delay due to the quality of synchronization.
The worst case of confirmation time increases as $\lambda + T_{BA} + T_{total\_ordering}$, but configuration changes rarely occur.

For the consensus timestamp, the causality may be demolished when adding or deleting chains because the median of the latest time of each chain may be early. 
However, the number of the blocks whose block time is earlier is very small. Thus, an easy way to fix this is to add a condition $b_i.timestamp = max(b_{i-1}.timestamp,b_i.timestamp)$ that ensures the consensus timestamp to continuously increase.

\bibliographystyle{alpha}
\bibliography{dexon}

\end{document}